\newtheorem{definition}{Definition}
\newtheorem{theorem}{Theorem}
\newtheorem{lemma}{Lemma}
\def\e{\bm{e}}
\def\f{\bm{f}}
\def\g{\bm{g}}
\def\h{\bm{h}}
\def\m{\bm{m}}
\def\p{\bm{p}}
\def\q{\bm{q}}
\def\x{\bm{x}}
\crefname{equation}{}{}
\crefname{figure}{Fig.}{Fig.}
\crefname{table}{Table}{Table}
\title[Advantages of a semi-implicit scheme for LLG equation]{Advantages of a semi-implicit scheme over a fully implicit scheme for Landau-Lifshitz-Gilbert equation}
\author[Y. Sun]{Yifei Sun}
\address{School of Mathematical Sciences\\ Soochow University\\ Suzhou\\ China.}
\email{1019227379@qq.com}
\author[J. Chen]{Jingrun Chen}
\address{School of Mathematical Sciences and Mathematical Center for Interdisciplinary Research\\ Soochow University\\ Suzhou\\ China.}
\email{jingrunchen@suda.edu.cn}
\author[R. Du]{Rui Du}
\address{School of Mathematical Sciences and Mathematical Center for Interdisciplinary Research\\ Soochow University\\ Suzhou\\ China.}
\email{durui@suda.edu.cn (Corresponding author)}
\author[C. Wang]{Cheng Wang}
\address{Mathematics Department\\ University of Massachusetts\\ North Dartmouth\\ MA 02747\\ USA.}
\email{cwang1@umassd.edu}
\begin{document}

\subjclass[2010]{35K61, 65N06, 65N12}

\date{\today}

\keywords{Landau-Lifshitz-Gilbert equation, Micromagnetics simulations, Crank-Nicolson scheme, Semi-implicit scheme}

\begin{abstract}
Magnetization dynamics in magnetic materials is modeled by the Landau-Lifshitz-Gilbert (LLG) equation, which is a nonlinear system of partial differential equations. In the LLG equation, the length of magnetization is conserved and the system energy is dissipative. Implicit and semi-implicit schemes have often been used in micromagnetics simulations due to their unconditional numerical stability. In more details, implicit schemes preserve the properties of the LLG equation, but solve a nonlinear system of equations per time step. In contrast, semi-implicit schemes only solve a linear system of equations, while additional operations are needed to preserve the length of magnetization. It still remains unclear which one shall be used if both implicit and semi-implicit schemes are available. In this work, using the implicit Crank-Nicolson (ICN) scheme as a benchmark, we propose to make this implicit scheme semi-implicit. It can be proved that both schemes are second-order accurate in space and time. For the unique solvability of nonlinear systems of equations in the ICN scheme, we require that the temporal step size scales quadratically with the spatial mesh size. It is numerically verified that the convergence of the nonlinear solver becomes slower for larger temporal step size and multiple magnetization profiles are obtained for different initial guesses. The linear systems of equations in the semi-implicit CN (SICN) scheme are unconditionally uniquely solvable, and the condition that the temporal step size scales linearly with the spatial mesh size is needed in the convergence of the SICN scheme. In terms of numerical efficiency, the SICN scheme achieves the same accuracy as the ICN scheme with less computational time. Based on these results, we conclude that a semi-implicit scheme is superior to its implicit analog both theoretically and numerically, and we recommend the semi-implicit scheme in micromagnetics simulations if both methods are available.
\end{abstract}

\maketitle

\section{Introduction}
In most solid materials, local magnetic orders posed by electrons do not generate magnetization macroscopically, because of the symmetry. Ferromagnets, however, spontaneously break the symmetry, and yield a macroscopic magnetization. Ferromagnets exhibit stable binary configurations, which makes them novel materials for data storage. Recent experimental advances have demonstrated the feasibility of effective and precise control of ferromagnetic structures by applying external fields \cite{zutic-2004}. Phenomenologically, magnetization dynamics is modeled by the Landau-Lifshitz-Gilbert (LLG) equation \cite{Gilbert1955A, landau1992theory}. Over the past decades, extensive numerical simulations have continued to overcome the difficulties associated with the nonlinearity of the model and the non-convex constraint on the magnetization (see \cite{Cimrak2008, Kruzik2006} for reviews and references therein). 

For temporal discretization, there have been explicit schemes \cite{ALOUGES2006}, implicit schemes \cite{Bartels2006,Fuwa2011}, and semi-implicit schemes \cite{Boscarino2016, Cimrak2005, Weinan2000, Gao2014, Lewis2003, li2020two, Wang2001, XIE2020109104}. Explicit schemes usually lead to a severe time step constraint for the numerical stability reason, while implicit and semi-implicit schemes are verified to be unconditionally stable, and thus are commonly used. Meanwhile, implicit and semi-implicit schemes have their advantages and disadvantages, respectively. On the one hand, implicit schemes preserve the length of magnetization and inherit the energy dissipation property automatically, while the semi-implicit schemes do not preserve the length of magnetization, which is compensated by a projection step. Moreover, the property of energy dissipation is not theoretically preserved for the semi-implicit schemes. These issues may affect the numerical accuracy in the long time simulation. On the other hand, semi-implicit schemes only require one linear system solver at each time step, so that its efficiency is obvious. In contrast, a nonlinear system has to be solved for the implicit schemes at each time step. 
In fact, a fixed-point iteration method, such as standard simple linearization or Newton's method, is employed to solve nonlinear systems of equations. Several iterations are needed at each time step to achieve iteration convergence for the implicit schemes. Due to the non-convex nature of the problem, the uniqueness of the numerical solution of the implicit schemes could only be established when $k=\mathcal{O}(h^2)$, with $k$ being the temporal step size and $h$ being the spatial mesh size \cite{Fuwa2011}. For the semi-implicit schemes, the unique solvability analysis could be theoretically justified for any $k, h > 0$. It is worth mentioning that no matter which scheme, the linear or linearized system is non-symmetric and is solved by GMRES in this work.

A natural question arises when solving the LLG equation: which one shall be preferred if both implicit and semi-implicit schemes are available? The answer to this question remains unclear in the existing literature. In this paper, we take the implicit Crank-Nicolson (ICN) scheme as an example, and try to make this implicit scheme semi-implicit. These two schemes are second-order accurate in both space and time. The numerical solution to a nonlinear system of equations in the ICN scheme is unique under a severe time step constraint $k=\mathcal{O}(h^2)$, while linear systems of equations in the semi-implicit Crank-Nicolson (SICN) scheme preserve the unconditionally unique solvability. To ensure the convergence of the SICN scheme, a much milder constraint, $k=\mathcal{O}(h)$, is needed. Moreover, the semi-implicit schemes are much more advantageous in terms of numerical efficiency, since only a linear system solver is needed at each time step. If both schemes work, the SICN scheme can reduce the computational time without loss of accuracy and shall be preferred in micromagnetics simulations.

The rest of the paper is organized as follows. In Section~\ref{sec:method}, we present the ICN and SICN schemes, and discuss their properties in details, such as conservation of magnetization length, energy dissipation, solvability, and convergence. In Section~\ref{sec:result}, accuracy and efficiency tests are conducted for both schemes, \textcolor{red}{and the limitations of the nonlinear solver in the ICN scheme is demonstrated.} Micromagnetics simulations, including different stable structures and a benchmark problem from the National Institute of Science and Technology (NIST), are performed with the SICN scheme in Section \ref{sec:micromagnetics}. Finally, some conclusions are made in Section \ref{sec:conclusion}.

\section{The second-order schemes}\label{sec:method}
\subsection{Model}
The LLG equation \cite{Gilbert1955A, landau1992theory} is a phenomenological model to describe the magnetization dynamics in a ferromagnetic material, which in the dimensionless form reads as
\begin{equation}\label{LLG}
	\bm{m}_t = -\bm{m}\times \bm{h}_{\text{eff}} - \alpha \bm{m} \times (\bm{m} \times \bm{h}_{\text{eff}}) , 
\end{equation}
coupled with homogeneous Neumann boundary condition 
\begin{equation}\label{eq7}
	\frac{\partial \bm{m}}{\partial \nu} \Big{\vert}_{\partial\Omega} = 0.
\end{equation}
Here $ \Omega $ is a bounded domain occupied by the ferromagnetic material, the magnetization $ \bm{m}: \Omega \subset \mathbb{R}^d \to S^2, d=1,2,3$ is a three dimensional vector field with a point-wise constraint $ |\bm{m}| = 1 $, and $ \nu $ is the unit outward normal vector along $\partial\Omega$. On the right-hand side of \eqref{LLG}, the first term is the gyromagnetic term, while the second term stands for the damping term with a dimensionless parameter $ \alpha > 0  $.

The effective field ${\bm h}_{\text{eff}}$ is obtained by taking the variation of the Gibbs free energy of the magnetic body with respect to $\m$. The free energy $F[\m]$ includes the exchange energy, the anisotropy energy, the magnetostatic energy, and the Zeeman energy. For a uniaxial material, the following form is taken  
\begin{equation*}\label{LL-Energy}
F[\m] = \frac {\mu_0 M_s^2}{2}\int_\Omega \left( \epsilon|\nabla\m|^2 +
Q(m_2^2 + m_3^2)
- \h_s\cdot\m -2\h_e\cdot\m \right)\mathrm{d}\x. 
\end{equation*}
Accordingly, the effective field $ \bm{h}_\text{eff} $ consists of the exchange field, the anisotropy field, the demagnetization or stray field $\bm{h}_s$, and the applied external field $\bm{h}_e$, i.e.,  
\begin{equation}\label{heff}
	\bm{h}_{\text{eff}} = \epsilon \Delta \bm{m} - Q(m_2 \bm{e}_2 + m_3 \bm{e}_3) + \bm{h}_s + \bm{h}_e.
\end{equation}
The dimensionless parameters are given by: $\epsilon = 2 C_{ex}/(\mu_0 M_s^2 L^2) $, $ Q = K_u/(\mu_0 M_s^2) $, $C_{\text{ex}}$ is the exchange constant, $K_u$ is the anisotropy constant, $ L $ is the diameter of the ferromagnetic body,  $ \mu_0 $ is the permeability of vacuum, and $ M_s $ stands for the saturation magnetization. The two unit vectors turn out to be $ \bm{e}_2 = (0,1,0)^T, \bm{e}_3 = (0,0,1)^T $, and $ \Delta $ denotes the standard Laplacian operator. Typical values of the physical parameters for Permalloy are included in \cref{tab1}. 
\begin{table}[H]
\centering
\caption{Typical values of the physical parameters for Permalloy, which is an alloy of Nickel (80\%) and Iron (20\%) frequently used in magnetic storage devices.}
\label{tab1}
\begin{tabular}{c|c}
\Xhline{1.5pt}
\multicolumn{2}{c}{\textbf{Physical Parameters for Permalloy}} \\ \hline
\rule{0pt}{10pt} $ K_u $ & $ 5.0 \times 10^2 J/m^3 $ \\ \hline
\rule{0pt}{10pt}$ C_{\text{ex}} $ & $ 1.3\times 10^{-11} J/m $ \\ \hline
\rule{0pt}{10pt}$ M_s $ & $ 8.0\times 10^5 A/m $ \\ \hline
\rule{0pt}{10pt}$ \mu_0 $ & $ 4\pi \times 10^{-7} N/A^2 $ \\ \hline
\rule{0pt}{10pt}$ \alpha $ & 0.1 \\ \Xhline{1.5pt}
\end{tabular}
\end{table}

The stray field $\h_s$ takes the form
\begin{equation}\label{eqq-5}
{\h}_{\text{s}}=-\nabla \int_{\Omega} \nabla N({\bm x}-{\bm y})\cdot {\bm m}({\bm y})\,d{\bm y},
\end{equation}
where $N({\bm x})=-\frac{1}{4\pi |{\bm x}|}$ is the Newtonian potential.

For the sake of simplicity, we define
\begin{equation*}
	\bm{f} = -Q(m_2 \bm{e}_2+ m_3 \bm{e}_3)+ \bm{h}_s +\bm{h}_e,
\end{equation*}
and rewrite \eqref{LLG} as 
\begin{equation}\label{eq10}
	\bm{m}_t = -\bm{m} \times (\epsilon \Delta \bm{m} + \bm{f} ) - \alpha \bm{m} \times \left(\bm{m} \times (\epsilon \Delta \bm{m} + \bm{f} )\right).
\end{equation}

\subsection{The Crank-Nicolson scheme}\label{discret}

Denote the temporal stepsize by $ k $, and $ t^n = nk $, $ n\le \lfloor \frac{T}{k} \rfloor $ with $  T $ the final time. Denote the spatial meshsize by $ h $, the standard second-order centered difference for Laplacian operator by $ \Delta_h $. We use the finite difference method to approximate the spatial derivatives in \eqref{eq10}

\[
\begin{aligned}
\Delta_{h}\bm{m}_{i, j, l} =&\frac{\bm{m}_{i+1, j, l}-2\bm{m}_{i, j, l}+\bm{m}_{i-1, j, l}}{h^2} \\
&+\frac{\bm{m}_{i, j+1, l}-2\bm{m}_{i, j, l}+\bm{m}_{i, j-1, l}}{h^2} \\
&+\frac{\bm{m}_{i, j, l+1}-2\bm{m}_{i, j, l}+\bm{m}_{i, j, l-1}}{h^2},
\end{aligned}
\]
where $\bm{m}_{i, j, l}$ stands for a numerical approximation for $\bm{m}$ at a cell-centered mesh point $\left(\left(i-\frac{1}{2}\right) h,\left(j-\frac{1}{2}\right) h,\left(l-\frac{1}{2}\right) h\right)$.
For the Neumann boundary condition \eqref{eq7}, a second-order discretization yields

\begin{equation}\label{neumann}
	\begin{array}{lll}
\bm{m}_{0, j, l}=\bm{m}_{1, j, l}, & \bm{m}_{N_x, j, l}=\bm{m}_{N_x+1, j, l}, & j=1, \cdots, N_y, l=1, \cdots, N_z, \\
\bm{m}_{i, 0, l}=\bm{m}_{i, 1, l}, & \bm{m}_{i, N_y, l}=\bm{m}_{i, N_y+1, l}, & i=1, \cdots, N_x, l=1, \cdots, N_z, \\
\bm{m}_{i, j, 0}=\bm{m}_{i, j, 1}, & \bm{m}_{i, j, N_z}=\bm{m}_{i, j, N_z+1}, & i=1, \cdots, N_x, j=1, \cdots, N_y.
\end{array}
\end{equation}

For brevity, we use $\bm{m}_h$ to denote the grid function of $\bm{m}$ over the uniform grids. For the temporal discretization, we employ the Crank-Nicolson algorithm to approximate the temporal derivative
\begin{equation*}
	\frac{\partial}{\partial t} \bm{m}_h^{n+\frac{1}{2}} \approx \frac{\bm{m}_h^{n+1}-\bm{m}_h^n}{k},
\end{equation*}
which gives the following ICN scheme:
\begin{equation}\label{eq12}
	\begin{aligned}
		\frac{\bm{m}_h^{n+1}-\bm{m}_h^n}{k} = &-\bm{m}_h^{n+\frac{1}{2}} \times (\epsilon \Delta_h \bm{m}_h^{n+\frac{1}{2}} + \bm{f}_h^{n+\frac{1}{2}})\\
		& - \alpha \bm{m}_h^{n+\frac{1}{2}} \times \left (\bm{m}_h^{n+\frac{1}{2}} \times (\epsilon \Delta_h \bm{m}_h^{n+\frac{1}{2}} + \bm{f}_h^{n+\frac{1}{2}} ) \right).
	\end{aligned}
\end{equation}
Here $\displaystyle \bm{m}_h^{n+\frac{1}{2}} = \frac{\bm{m}_h^{n+1}+ \bm{m}_h^{n}}{2} $ , $\displaystyle \bm{f}_h^{n+\frac{1}{2}} = \frac{\bm{f}_h^{n+1}+ \bm{f}_h^{n}}{2} $.

To ease the description, we simplify \eqref{eq12} as 
\begin{equation}\label{eq_length}
\frac{\bm{m}_h^{n+1}-\bm{m}_h^n}{k} = -\bm{m}_h^{n+\frac{1}{2}} \times \Delta_h \bm{m}_h^{n+\frac{1}{2}} - \alpha \bm{m}_h^{n+\frac{1}{2}}\times (\bm{m}_h^{n+\frac{1}{2}}\times \Delta_h \bm{m}_h^{n+\frac{1}{2}}).
\end{equation}
We can solve the nonlinear systems of equations in \eqref{eq_length} by the following three different strategies. 

\begin{itemize}
\item[(a)] Explicit iteration
\begin{equation*}\label{iter_ex}
\frac{\m_h^{n,\ell+1} - \m_h^n}{k} = - \m_h^{n+\frac{1}{2},\ell} \times \Delta_h \m_h^{n+\frac{1}{2},\ell} - \alpha \m_h^{n+\frac{1}{2},\ell} \times \left( \m_h^{n+\frac{1}{2},\ell} \times \Delta_h \m_h^{n+\frac{1}{2},\ell} \right), 
\end{equation*}
where $\displaystyle \m_h^{n+\frac{1}{2},\ell} = \frac{\m_h^{n,\ell} + \m_h^n}{2}$. It will be prved later that $\lim_{\ell \to +\infty} \frac{\m_h^{n,\ell} + \m_h^n}{2} $ exists for every $n$, and thus we can set $\m_h^{n+1} = \lim_{\ell \to +\infty} \m_h^{n,\ell}$.
\item[(b)] Semi-implicit iteration
\begin{equation*}\label{iter_im}
\frac{\m_h^{n,\ell+1} - \m_h^n}{k} = - \m_h^{n+\frac{1}{2},\ell} \times \Delta_h \m_h^{n+\frac{1}{2},\ell+1} - \alpha \m_h^{n+\frac{1}{2},\ell} \times \left( \m_h^{n+\frac{1}{2},\ell} \times \Delta_h \m_h^{n+\frac{1}{2},\ell+1} \right).
\end{equation*}
A linear system of equations needs to be solved at each iteration and its unique solvability can be  similarly proved. 
We expect $\m_h^{n+1} = \lim_{\ell \to +\infty} \m_h^{n,\ell} $ under mild conditions.

\item[(c)] Newton's iteration
\begin{equation}\label{eqn:Newton}
\mathcal{N}(\bm{m}_h^{n+1}) = \bm{\phi}^n,  
\end{equation}
where
\[ 
\begin{aligned} 
\mathcal{N}(\m) =& \m +k \frac{\m+\m_h^n}{2} \times \Delta_h\frac{\m+\m_h^n}{2} \\ &+ \alpha k \frac{\m+\m_h^n}{2} \times \left( \frac{\m+\m_h^n}{2} \times \Delta_h \frac{\m+\m_h^n}{2} \right),
\end{aligned} 
\]
and
\[
\bm{\phi}^n=\bm{m}_h^n.
\]
The iteration algorithm \eqref{eqn:Newton} is obtained by expanding \eqref{eq_length} and combing the corresponding terms.

The details of Newton's method are given in Algorithm \ref{algorithm} with
\begin{align*} 
\mathcal{D}(\m)\delta\m  
= & \delta\m - k \Delta_h\frac{\m+\m_h^n}{2}\times \frac{\delta\m}{2}+k\frac{\m+\m_h^n}{2} \times \Delta_h\frac{\delta\m}{2} \\
& -2\alpha k \frac{\m+\m_h^n}{2} \times \left( \Delta_h \frac{\m+\m_h^n}{2} \times \frac{\delta\m}{2} \right) \\
& +\alpha k \Delta_h \frac{\m+\m_h^n}{2} \times \left(\frac{\m+\m_h^n}{2} \times \frac{\delta\m}{2} \right) \\
& +\alpha k \frac{\m+\m_h^n}{2} \times \left( \frac{\m+\m_h^n}{2} \times \Delta_h \frac{\delta\m}{2} \right),
\end{align*}
and
\begin{align*}
\mathcal{D}(\m) = & 	\bm{I} - \frac{k}{4}\left(\m+\m_h^n\right)\times + \frac{k}{4}\left(\m+\m_h^n\right)\times\Delta_h \\
& -\frac{\alpha k}{4} \left(\m+\m_h^n\right) \times \left[ \Delta_h \left(\m+\m_h^n\right) \times \right] \\
& +\frac{\alpha k}{8} \Delta_h \left(\m+\m_h^n\right) \times \left[ \left(\m+\m_h^n\right) \times \right] \\
& +\frac{\alpha k}{8} \left(\m+\m_h^n\right) \times \left[ \left(\m+\m_h^n\right) \times \Delta_h \right] . 
\end{align*}
\begin{algorithm}[H]
	\caption{Newton's Method}\label{algorithm}
	\KwIn{$ \bm{m}_h^n $}
	\KwOut{$ \bm{m}_h^{n+1} $}
	\textbf{Initialization}: $\m^* \leftarrow \m_h^n$\;
	\qquad \qquad \qquad\qquad $ |\delta\bm{m}| = 1$, $\text{iter} = 0$\;
	\textbf{Calculate}: $\bm{\phi}^n$\;
	
	\While{$ |\delta\bm{m}| >$ tol \textbf{and} $\text{iter} < \text{MaxIter}$}{
		Calculate: $\mathcal{N}(\m^*), \mathcal{D}(\m^*)$\; $\mathcal{D}(\m^*) \delta\bm{m} = \bm{\phi}^n - \mathcal{N}(\m^*)$\;
		$\m^* \leftarrow \m^* + \lambda \delta\bm{m}$\; }
	
	$\m_h^{n+1} \leftarrow \m^*$.
\end{algorithm}

\end{itemize}

Strategy (a) will be used to show the uniqueness of a solution to the nonlinear system of equations \eqref{eq_length}; see Theorem \ref{thm:ICNsolvability}. Strategy (b) is a simple fixed-point iteration and Strategy (c) (the Newton's method) can be viewed as the special fixed-point iteration. In particular, Strategy (c) has the best efficiency. Therefore, we will apply Newton's method in Section \ref{sec:result}.

To proceed, we introduce some notations of grid functions.
\begin{definition}[Discrete inner product]
	For grid functions $\bm{f}_h$ and $\bm{g}_h$ over the uniform numerical grid, we define
	\[ \langle \f_h,\g_h \rangle  = h^d 
	\sum_{\mathcal{I} \in \Lambda_d } \f_{\mathcal{I}} \cdot \g_{\mathcal{I}} , \]
where $\Lambda_d$ is the index set and $\mathcal{I}$ is the index which closely depends on the dimension d.

\end{definition}

\begin{definition}
	For the grid function $\f_h$, the average of summation is defined as 
	\[\bar{\f_h} = h^d \sum_{\mathcal{I} \in \Lambda_d} \f_{\mathcal{I}}.\]
\end{definition}

\begin{definition}
	For the grid function $\f_h$ with $\bar{\f_h} =0 $ and the homogeneous Neumann boundary condition, the discrete $H_h^{-1}$ and $ H_h^1$ norms are introduced as 
	\[
	\| \f_h \|_{-1} = \langle (-\Delta_h)^{-1} \f_h, \f_h \rangle^{1/2},
	\] 
	\[
	\| \f_h\|_{H^1_h} = (\|\f_h\|_2^2 + \|\nabla_h \f_h\|_2^2)^{1/2} .
	\]
\end{definition}

\begin{lemma}[{\cite[Lemma 1]{Chen2021}}]\label{dislemma} For grid functions $\f_h$ and $\g_h$ over the uniform numerical grid, we have
\begin{align*}
\left\|\nabla_{h}\left(\bm{f}_{h} \times \bm{g}_{h}\right)\right\|_{2}^{2} & \leq \mathcal{C}\left(\left\|\bm{f}_{h}\right\|_{\infty}^{2}\left\|\nabla_{h} \bm{g}_{h}\right\|_{2}^{2}+\left\|\bm{g}_{h}\right\|_{\infty}^{2}\left\|\nabla_{h} \bm{f}_{h}\right\|_{2}^{2}\right) , \\
\left\langle\left(\bm{f}_{h} \times \Delta_{h} \bm{g}_{h}\right) \times \bm{f}_{h}, \bm{g}_{h}\right\rangle &=\left\langle\bm{f}_{h} \times\left(\bm{g}_{h} \times \bm{f}_{h}\right), \Delta_{h} \bm{g}_{h}\right\rangle ,\\
\left\langle\bm{f}_{h} \times\left(\bm{f}_{h} \times \bm{g}_{h}\right), \bm{g}_{h}\right\rangle &=-\left\|\bm{f}_{h} \times \bm{g}_{h}\right\|_{2}^{2} .
\end{align*}
\end{lemma}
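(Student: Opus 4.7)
The plan is to handle the three assertions separately, treating (2) and (3) as pointwise vector identities summed against the discrete inner product, and (1) as a discrete Leibniz-type estimate.

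For identity (3), I would first establish the pointwise statement. At each grid node, the BAC-CAB rule gives $\bm{a}\times(\bm{a}\times\bm{b}) = \bm{a}(\bm{a}\cdot\bm{b}) - \bm{b}|\bm{a}|^2$, so taking the dot product with $\bm{b}$ and invoking Lagrange's identity produces $(\bm{a}\cdot\bm{b})^2 - |\bm{a}|^2|\bm{b}|^2 = -|\bm{a}\times\bm{b}|^2$. Setting $\bm{a} = \bm{f}_{\mathcal{I}}$ and $\bm{b} = \bm{g}_{\mathcal{I}}$, multiplying by $h^d$, and summing over $\mathcal{I}\in\Lambda_d$ immediately gives the claim.

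For identity (2), I would again argue pointwise. The scalar triple product $(\bm{a}\times\bm{b})\cdot\bm{c} = \bm{a}\cdot(\bm{b}\times\bm{c})$ with $\bm{a} = \bm{f}\times\Delta_h\bm{g}$, $\bm{b} = \bm{f}$, $\bm{c}=\bm{g}$, rewrites the left-hand integrand as $(\bm{f}\times\Delta_h\bm{g})\cdot(\bm{f}\times\bm{g})$. Applying $(\bm{a}\times\bm{b})\cdot(\bm{c}\times\bm{d}) = (\bm{a}\cdot\bm{c})(\bm{b}\cdot\bm{d}) - (\bm{a}\cdot\bm{d})(\bm{b}\cdot\bm{c})$ yields $|\bm{f}|^2(\Delta_h\bm{g}\cdot\bm{g}) - (\bm{f}\cdot\bm{g})(\bm{f}\cdot\Delta_h\bm{g})$. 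The right-hand integrand is handled by BAC-CAB: $\bm{f}\times(\bm{g}\times\bm{f}) = \bm{g}|\bm{f}|^2 - \bm{f}(\bm{f}\cdot\bm{g})$, whose dot product with $\Delta_h\bm{g}$ is the same expression. Summing over $\mathcal{I}$ gives the identity.

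For (1), the heart of the argument, I would use the discrete product rule for the forward difference operator $D_h^+$ in each coordinate direction: by adding and subtracting $\bm{f}_{\mathcal{I}+\bm{e}_r}\times\bm{g}_{\mathcal{I}}$, one obtains
\[
D_h^+(\bm{f}_h\times\bm{g}_h)_{\mathcal{I}} = \bm{f}_{\mathcal{I}+\bm{e}_r}\times (D_h^+\bm{g}_h)_{\mathcal{I}} + (D_h^+\bm{f}_h)_{\mathcal{I}}\times \bm{g}_{\mathcal{I}}.
\]
Taking modulus, squaring, and applying $(a+b)^2\le 2a^2+2b^2$ together with $|\bm{u}\times\bm{v}|\le|\bm{u}||\bm{v}|$ bounds the edge-wise contribution by $2\|\bm{f}_h\|_\infty^2|(D_h^+\bm{g}_h)_{\mathcal{I}}|^2 + 2\|\bm{g}_h\|_\infty^2|(D_h^+\bm{f}_h)_{\mathcal{I}}|^2$. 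Summing over all edges and coordinate directions with weight $h^d$ delivers the stated estimate with $\mathcal{C}=2$. The Neumann boundary condition in \eqref{neumann} extends $\bm{f}_h$ and $\bm{g}_h$ to the ghost cells consistently, so no boundary correction appears.

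The main obstacle is the bookkeeping in (1): one has to be consistent about where the factors $\bm{f}_{\mathcal{I}+\bm{e}_r}$ versus $\bm{f}_{\mathcal{I}}$ are evaluated so that the $\|\cdot\|_\infty$ bound is sharp, and one must ensure that both forward and backward differences contribute symmetrically after summation so that $\|\nabla_h(\cdot)\|_2$ on the left is genuinely controlled by $\|\nabla_h\bm{f}_h\|_2$ and $\|\nabla_h\bm{g}_h\|_2$ on the right. Identities (2) and (3) are by contrast purely algebraic and require no boundary or regularity considerations.
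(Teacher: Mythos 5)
Your proposal is correct: the paper itself gives no proof of this lemma (it is quoted verbatim from~\cite[Lemma~1]{Chen2021}), and your argument is precisely the standard one behind that reference --- the second and third identities are pointwise consequences of the scalar triple product, the BAC--CAB rule, and Lagrange's identity, summed against the discrete inner product, while the first estimate follows from the discrete Leibniz rule for the forward difference together with $|\bm{u}\times\bm{v}|\le|\bm{u}||\bm{v}|$ and $(a+b)^2\le 2a^2+2b^2$, giving $\mathcal{C}=2$. No gaps; your remark about evaluating $\bm{f}$ at the shifted node $\mathcal{I}+\bm{e}_r$ and $\bm{g}$ at $\mathcal{I}$ (or vice versa) is exactly the bookkeeping needed, and since the estimate only requires the $\|\cdot\|_\infty$ bounds, no boundary correction is needed for grid functions satisfying \eqref{neumann}.
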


In the following part, we focus on the investigation of the theoretical properties for the ICN scheme.

\noindent \textit{Length conservation of the ICN scheme}

Taking a vector inner product on both sides of \eqref{eq_length} with $\displaystyle  \frac{\bm{m}_h^{n+1}+ \bm{m}_h^{n}}{2}=\bm{m}_h^{n+\frac{1}{2}} $, we obtain
\begin{align*}
	\frac{\bm{m}_h^{n+1}-\bm{m}_h^n}{k} \cdot \frac{\bm{m}_h^{n+1}+ \bm{m}_h^{n}}{2}
	= & \bm{m}_h^{n+\frac{1}{2}} \times \Delta_h \bm{m}_h^{n+\frac{1}{2}} \cdot \bm{m}_h^{n+\frac{1}{2}} \\
	 & - \alpha \bm{m}_h^{n+\frac{1}{2}}\times (\bm{m}_h^{n+\frac{1}{2}}\times \Delta_h \bm{m}_h^{n+\frac{1}{2}}) \cdot \bm{m}_h^{n+\frac{1}{2}} \\
	= & 0 .
\end{align*}
Henceforth, the following identity is valid at a point-wise level: 
\[{|\bm{m}_h^{n+1}|}^2 = {|\bm{m}_h^n|}^2 .\]
Of course, it is straightforward to verify that this equality also holds true for \eqref{eq12}.

\noindent \textit{Energy dissipation of the ICN scheme}

Here we consider the 1D scheme, which can be easily extended to the 3D case. Let $E_h(\m_h^n)$ be the discrete energy defined as 
\[
\begin{aligned}
E_{h}\left(\m_h^{n}\right) 
=\frac{1}{h} \sum_{i=1}^{N_{x}}\left|\m_{i+1}^{n}-\m_{i}^{n}\right|^{2}.
\end{aligned}
\]
The difference of energy between the two successive steps turns out to be 
\begin{equation}\label{enerdif}
\begin{aligned}
E_h\left(\m_h^{n+1}\right)-E_h\left(\m_h^{n}\right) =&\frac{1}{h} \sum_{i=1}^{N x}\left(\left|\m_{i+1}^{n+1}-\m_{i}^{n+1}\right|^{2}-\left|\m_{i+1}^{n}-\m_{i}^{n}\right|^{2}\right) \\
= &\frac{1}{h} \sum_{i=1}^{N x}\left(\left|\m_{i+1}^{n+1}\right|^{2}-\left|\m_{i+1}^{n}\right|^{2}+\left|\m_{i}^{n+1}\right|^{2}-\left|\m_{i}^{n}\right|^{2} \right. \\ 
& \left.\quad\qquad -2 \m_{i}^{n+1} \cdot \m_{i+1}^{n+1}+2 \m_{i}^{n} \cdot \m_{i+1}^{n}\right) \\
= & - \frac{2}{h} \sum_{i=1}^{N_x}\left(\m_{i}^{n+1} \cdot \m_{i+1}^{n+1}- \m_{i}^{n} \cdot \m_{i+1}^{n}\right).
\end{aligned}
\end{equation}
Taking a discrete inner product on both sides of \eqref{eq_length} with $\Delta_h\frac{\m_h^{n+1}+\m_h^n}{2}=\Delta_h\m_h^{n+\frac{1}{2}}$, we obtain
\begin{equation}\label{hold1} 
\begin{aligned} 
  & 
\langle \frac{\m_h^{n+1}-\m_h^n}{k} , \Delta_h\frac{\m_h^{n+1}+\m_h^n}{2} \rangle 
\\
  = & \langle -\m_h^{n+\frac{1}{2}} \times \Delta_h \m_h^{n+\frac{1}{2}} - \alpha \m_h^{n+\frac{1}{2}}\times (\m_h^{n+\frac{1}{2}}\times \Delta_h\m_h^{n+\frac{1}{2}})
, \Delta_h \m_h^{n+\frac{1}{2}}\rangle. 
\end{aligned} 
\end{equation}
A direct calculation gives
\begin{equation}\label{hold2}
	\begin{aligned}
\langle \frac{\m_h^{n+1}-\m_h^n}{k} , \Delta_h\frac{\m_h^{n+1}+\m_h^n}{2} \rangle 
= & \frac{h}{2k} \sum_{i=1}^{N_x}(\m_i^{n+1} - \m_i^{n}) \cdot \Delta_h (\m_i^{n+1}+\m_i^n)  \\ 
= &\frac{1}{hk} \sum_{i=1}^{N_x} (\m_i^{n+1} \cdot \m_{i+1}^{n+1} - \m_i^n \cdot \m_{i+1}^n) . 
\end{aligned}
\end{equation}
The last step comes from the fact that 
\[\m_0^{n+1} \cdot \m_1^{n+1} - \m_0^{n}\cdot \m_1^{n} = \m_{N_x}^{n+1} \cdot \m_{N_x+1}^{n+1} - \m_{N_x}^{n}\cdot \m_{N_x+1}^{n} = 0  , \]
due to the discrete Neumann condition \eqref{neumann} and the conservation of magnetization length. From Lemma \ref{dislemma}, we have
\begin{equation}\label{hold3}
	\begin{aligned}
& \langle -\m_h^{n+\frac{1}{2}} \times \Delta_h \m_h^{n+\frac{1}{2}} - \alpha \m_h^{n+\frac{1}{2}}\times (\m_h^{n+\frac{1}{2}}\times \Delta_h \m_h^{n+\frac{1}{2}})
, \Delta_h \m_h^{n+\frac{1}{2}} \rangle \\ 
= & -\alpha \langle  \bm{m}_h^{n+\frac{1}{2}}\times (\bm{m}_h^{n+\frac{1}{2}}\times \Delta_h \bm{m}_h^{n+\frac{1}{2}})
, \Delta_h \m_h^{n+\frac{1}{2}} \rangle   \\
 =&  \alpha  \|\m_h^{n+\frac{1}{2}} \times \Delta_h \m_h^{n+\frac{1}{2}}\|_2^2 .
\end{aligned}
\end{equation}
Therefore, \eqref{enerdif} together with \eqref{hold1}, \eqref{hold2} and \eqref{hold3} implies that 
\[
\begin{aligned}
E_h\left(\m_h^{n+1}\right)-E_h\left(\m_h^{n}\right) 
 =& -2 \alpha k \|\m_h^{n+\frac{1}{2}} \times \Delta_h \m_h^{n+\frac{1}{2}}\|_2^2.
\end{aligned}
\]
In other words, the ICN scheme preserves the energy dissipation law for $\alpha > 0$. The analysis for the \eqref{eq12} is similar.

\noindent \textit{Conditional solvability of the ICN scheme}
\begin{theorem}[{\cite[Theorem 1]{Fuwa2011}}]\label{thm:ICNsolvability}
Assume that $\m_h^n$ satisfies $\| \m_h^n \|_\infty \le 1+\delta$ for some nonnegative $\delta$. We set 
\[\beta < \beta^\ast =\min \left\{ \frac{\eta}{(1+\delta+\eta)^2 + \alpha (1+\delta+\eta)^3|} , \frac{1}{ 2(1+\delta+\eta)+3\alpha(1+\delta+\eta)^2} \right\} \]
for some positive $\eta$. Besides, we introduce the following notation
\[\rho = \beta \cdot (2(1+\delta+\eta) + 3\alpha (1+\delta + \eta)^2) .\] 
If $\frac{k}{h^2} \le \beta$, and $ \m_h^{n,0} $ satisfy 
\[ \| \frac{\m_h^{n,0}+\m_h^n}{2} \|_\infty \le 1 +\delta +\eta .\]
then we have
\[ \| \m_h^{n,\ell+1} -\m_h^{n,\ell} \|_\infty \le \rho^\ell \|\m_h^{n,1} -\m_h^{n,0}\|_\infty ,  \]
\[ \| \frac{\m_h^{n,\ell}+\m_h^n}{2} \|_\infty \le 1 +\delta +\eta .\]
Furthermore, there exists a limit $\m^\ast \in L_h^\infty(\Omega) $ such that
\[ \lim_{\ell\to +\infty} \| \m_h^{n,\ell} -\m^\ast \|_\infty = 0 ,\]
\[ \| \m^\ast \|_\infty \le 1+\delta ,\]
and
\[\frac{\m^\ast- \m_h^n}{k} = - \frac{\m^\ast+ \m_h^n}{2} \times \Delta_h \frac{\m^\ast+ \m_h^n}{2} - \alpha \frac{\m^\ast+ \m_h^n}{2} \times \left( \frac{\m^\ast+ \m_h^n}{2} \times \Delta_h \frac{\m^\ast+ \m_h^n}{2} \right) .\]
\end{theorem}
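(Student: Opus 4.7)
The plan is to apply Banach's fixed-point theorem to the map $T\colon \m_h^{n,\ell}\mapsto \m_h^{n,\ell+1}$ defined by the explicit iteration in Strategy (a), working on the closed ball
$B_\eta = \{\bm{v}\in L_h^\infty : \|(\bm{v}+\m_h^n)/2\|_\infty \le 1+\delta+\eta\}$
of the finite-dimensional and hence complete space $L_h^\infty$. The argument splits into three standard steps: invariance of $B_\eta$ under $T$, contractivity of $T$ on $B_\eta$ with constant $\rho$, and identification of the limit as a solution of the ICN nonlinear system satisfying the sharper pointwise bound.

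For invariance I would induct on $\ell$. Taking $\|\cdot\|_\infty$ in the iteration and using the pointwise bound $|\bm{u}\times\bm{v}|\le|\bm{u}||\bm{v}|$ together with the inverse inequality $\|\Delta_h \bm{u}\|_\infty\le C_d h^{-2}\|\bm{u}\|_\infty$ for the cell-centered discrete Laplacian produces
\[
\|\m_h^{n,\ell+1}-\m_h^n\|_\infty \;\le\; \tfrac{C_d\,k}{h^2}\bigl[(1+\delta+\eta)^2 + \alpha(1+\delta+\eta)^3\bigr],
\]
and the first branch of the $\beta^\ast$ bound (after absorbing $C_d$) ensures this is at most $\eta$, which keeps the iterate in $B_\eta$ since $\|\m_h^n\|_\infty\le 1+\delta$. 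For the contraction I would subtract the iteration at levels $\ell$ and $\ell-1$ and expand each difference of cross products via the bilinear identity $\bm{u}_1\times\bm{v}_1 - \bm{u}_2\times\bm{v}_2 = (\bm{u}_1-\bm{u}_2)\times\bm{v}_1 + \bm{u}_2\times(\bm{v}_1-\bm{v}_2)$, once in the gyromagnetic term and twice in the damping triple product. Each of the resulting five terms carries a single factor $\m_h^{n+1/2,\ell}-\m_h^{n+1/2,\ell-1}=(\m_h^{n,\ell}-\m_h^{n,\ell-1})/2$, with the remaining factors controlled by $1+\delta+\eta$ from Step 1. Summing the two gyromagnetic and three damping contributions yields
\[
\|\m_h^{n,\ell+1}-\m_h^{n,\ell}\|_\infty \;\le\; \beta\bigl[2(1+\delta+\eta)+3\alpha(1+\delta+\eta)^2\bigr]\,\|\m_h^{n,\ell}-\m_h^{n,\ell-1}\|_\infty,
\]
so that the second branch of $\beta^\ast$ is exactly the condition for $\rho<1$, and iterating gives the claimed geometric bound.

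The geometric decay makes $\{\m_h^{n,\ell}\}$ Cauchy in $L_h^\infty$, hence convergent to some $\m^\ast$; passing to the limit in the iteration formula (continuous in $L_h^\infty$) identifies $\m^\ast$ as a solution of the ICN nonlinear system, and the $B_\eta$ bound descends to it. The sharper estimate $\|\m^\ast\|_\infty\le 1+\delta$ does not follow from the contraction itself; I would recover it by taking the pointwise inner product of the fixed-point equation with $(\m^\ast+\m_h^n)/2$, at which point the same orthogonality identities used earlier to prove length conservation annihilate the right-hand side grid point by grid point, giving $|\m^\ast|^2=|\m_h^n|^2$ and therefore $\|\m^\ast\|_\infty\le\|\m_h^n\|_\infty\le 1+\delta$. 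The main technical obstacle is the $L^\infty$ bookkeeping in Step 2: the damping triple product expands into three terms each quadratic in a $\m$-variable of size $\le 1+\delta+\eta$, and one must distribute the increment $\m_h^{n,\ell}-\m_h^{n,\ell-1}$ over exactly the right factor in each term to recover the sharp polynomial $2(1+\delta+\eta)+3\alpha(1+\delta+\eta)^2$ and hence the precise form of $\beta^\ast$.
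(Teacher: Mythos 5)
The paper does not reprove this theorem (it is quoted from Fuwa--Tabata--Suzuki), and your outline --- invariance of the ball $B_\eta$ under the explicit iteration, contraction with factor $\rho$, passage to the limit in the fixed-point equation, and the pointwise orthogonality argument giving $|\m^\ast|^2=|\m_h^n|^2$ and hence $\|\m^\ast\|_\infty\le 1+\delta$ --- is the right structure and is surely the same kind of argument as in the cited source; the final length-conservation step in particular is exactly the computation the paper uses elsewhere.

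There is, however, a concrete quantitative gap: your estimates do not produce the stated $\beta^\ast$ and $\rho$, and they cannot be rescued by ``absorbing $C_d$'' because those constants are prescribed in the statement. For the standard centered difference one has $\|\Delta_h\bm{u}\|_\infty\le C_d h^{-2}\|\bm{u}\|_\infty$ with $C_d=4$ in 1D (and $4d$ in $d$ dimensions); tracking your five terms with this bound, each increment carries the factor $\tfrac12$ from $\m_h^{n+\frac12,\ell}-\m_h^{n+\frac12,\ell-1}=\tfrac12(\m_h^{n,\ell}-\m_h^{n,\ell-1})$, and the total comes out to $2\beta\bigl(2M+3\alpha M^2\bigr)\|\m_h^{n,\ell}-\m_h^{n,\ell-1}\|_\infty$ with $M=1+\delta+\eta$, i.e.\ a contraction factor $2\rho$, and likewise the invariance step only closes under $\beta\le \eta/\bigl(2(M^2+\alpha M^3)\bigr)$; so as written you prove the theorem only with the thresholds halved. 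The missing idea that recovers the exact constants is the cancellation $\bm{a}\times\bm{a}=\bm{0}$: because the field multiplying $\Delta_h$ is the same midpoint field, $\m_h^{n+\frac12,\ell}\times\Delta_h\m_h^{n+\frac12,\ell}$ involves only the neighboring values $(\m_{i+1}^{n+\frac12,\ell}+\m_{i-1}^{n+\frac12,\ell})/h^2$, so the effective operator bound is $2/h^2$ rather than $4/h^2$ in 1D. Rewriting both the gyromagnetic and the damping terms through this neighbor-sum operator \emph{before} performing the bilinear splitting gives exactly $\rho=\beta\bigl(2M+3\alpha M^2\bigr)$, and in the invariance step it gives $\|\m_h^{n,\ell+1}-\m_h^n\|_\infty\le 2\beta(M^2+\alpha M^3)\le 2\eta$ under the first branch of $\beta^\ast$, which is precisely what is needed since the midpoint only sees half of this difference (your requirement of $\le\eta$ on the difference is stronger than necessary). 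Note also that these constants are specific to the 1D scheme of the cited reference; in $d$ dimensions the same argument replaces the $2$ by $2d$, so the stated $\beta^\ast$ would again have to be adjusted.
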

In this sense, we can define $\m_h^{n+1}$ uniquely by $\m^\ast$. 

\noindent \textit{Convergence of the ICN scheme}
\begin{theorem}[1D case, {\cite[Theorem 5]{Fuwa2011}}]
For the sufficiently smooth solution $ \m_e(x,t) $ of  \cref{LLG,eq7} and the finite difference solution $ \m_i^n $ of \eqref{eq_length}, we define the error on the discrete grids as $ \bm{e}_i^n =\m_i^n - \m_e(ih,nk), i=0, 1,\cdots, N_x, n=0, 1, \cdots, \lfloor \frac{T}{k} \rfloor  $. We assume that $ \m_e(x, 0) \in H^4(\Omega) $ with $\Omega=[0,L]$. If
\[
k \le 1/C_e \left( -1, \alpha, \|\frac{\partial \m_e(x,0)}{\partial x}\|_{H^3(\Omega)} , \|\nabla_h \m_h^0 \|_2^2, 0, L^2,T \right),
\]
and $ C_e $ is a function defined in \cite[(86)]{Fuwa2011}, then we have 
\[\|\e^n\|_{H^1_h} = O(h^2+k^2),\quad \forall\; n \ge 1 .\]
\end{theorem}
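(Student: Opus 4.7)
The plan is to follow the standard consistency–stability framework for finite difference schemes, adapted to cope with the triple cross products of the LLG dynamics, and to close everything by a bootstrap induction that simultaneously supplies the a priori $L^\infty$ control needed in the nonlinear estimates. First I would carry out a consistency analysis: substituting the exact solution $\m_e(ih,nk)$ into \eqref{eq_length} and Taylor expanding in time about $t^{n+1/2}$, together with the second-order accuracy of $\Delta_h$, produces a local truncation error $\bm{\tau}^n$ of size $O(h^2+k^2)$ uniformly in the grid. The crucial point is that the Crank–Nicolson midpoint $\tfrac12(\m_e^{n+1}+\m_e^n)$ agrees with $\m_e(\cdot,t^{n+1/2})$ to order $k^2$, which is what preserves temporal second order through the nonlinear terms.

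Next I would subtract the consistency equation from \eqref{eq_length} to obtain an equation for $\e^n$. The cross-product nonlinearities are split via identities of the form
\[
\m_h^{n+1/2}\times\Delta_h\m_h^{n+1/2} - \m_e^{n+1/2}\times\Delta_h\m_e^{n+1/2}
= \e^{n+1/2}\times\Delta_h\m_e^{n+1/2} + \m_h^{n+1/2}\times\Delta_h\e^{n+1/2},
\]
and analogously for the damping triple product, so that every contribution is linear in $\e$ up to coefficients involving $\m_h$ and $\m_e$ in either $L^\infty$ or $H^1_h$. At this stage the estimates of Lemma \ref{dislemma} apply directly.

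The stability step is then an energy estimate in $H^1_h$. I would take the discrete inner product of the error equation with $-\Delta_h\e^{n+1/2}$; the left-hand side yields $\frac{1}{2k}(\|\nabla_h\e^{n+1}\|_2^2-\|\nabla_h\e^n\|_2^2)$ through the analogue of the identity used in \eqref{hold1}–\eqref{hold2}. The damping term on the right yields a nonnegative contribution $\alpha\|\m_h^{n+1/2}\times\Delta_h\e^{n+1/2}\|_2^2$, while the remaining cross-product terms are controlled by Lemma \ref{dislemma} and Cauchy–Schwarz, using $\|\m_h^{n+1/2}\|_\infty\le 1+\delta+\eta$ and $\|\m_e\|_{W^{2,\infty}}$ bounded. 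Pairing this with the standard $L^2$ test against $\e^{n+1/2}$ and invoking a discrete Gronwall inequality produces
\[
\|\e^{n+1}\|_{H^1_h}^2 \le (1+Ck)\|\e^n\|_{H^1_h}^2 + Ck(h^2+k^2)^2,
\]
which iterates to the claimed $O(h^2+k^2)$ bound on any finite interval $[0,T]$.

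The hard part, and the source of the time-step condition in the theorem, is that the above estimate presupposes a uniform $L^\infty$ bound on $\m_h^n$ so that the coefficients multiplying $\e$ in Lemma \ref{dislemma} stay under control, and it also presupposes that $\m_h^{n+1}$ actually exists at each step. Both are handled by a bootstrap induction: assuming $\|\e^n\|_{H^1_h}\le C_0(h^2+k^2)$, the one-dimensional discrete Sobolev embedding $H^1_h\hookrightarrow L^\infty_h$ gives $\|\e^n\|_\infty\le \eta$ once $h,k$ are small, hence $\|\m_h^n\|_\infty\le 1+\delta+\eta$; Theorem \ref{thm:ICNsolvability} then furnishes the existence of $\m_h^{n+1}$ and a bound $\|\m_h^{n+1}\|_\infty\le 1+\delta$, after which the energy estimate propagates the inductive hypothesis to step $n+1$. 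The threshold $C_e$ collects the constants arising in this bootstrap, and the quantitative dependence on $\|\partial_x\m_e(\cdot,0)\|_{H^3}$ and $\|\nabla_h\m_h^0\|_2$ originates from the Sobolev embedding constant and the $W^{2,\infty}$ bound on $\m_e$ obtained from higher regularity of the continuous solution.
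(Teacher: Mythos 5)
The paper does not actually prove this theorem: it is quoted verbatim from \cite[Theorem 5]{Fuwa2011}, so your proposal can only be compared with the cited argument. At the level of strategy your outline (consistency of the Crank--Nicolson midpoint discretization, a bilinear splitting of the cross-product differences, an $H^1_h$ energy estimate tested with $-\Delta_h\e^{n+\frac12}$, discrete Gronwall, and an induction providing a priori bounds) is indeed the standard framework used there, and the splitting identity you write is correct. But two concrete gaps remain. First, your bootstrap obtains the uniform bound on $\m_h^n$ from a discrete Sobolev embedding of the error and obtains existence of $\m_h^{n+1}$ from Theorem \ref{thm:ICNsolvability}; that theorem requires $k/h^2\le\beta$, a mesh-ratio restriction that is \emph{not} among the hypotheses of the convergence theorem (its only condition is $k\le 1/C_e$ in terms of the data), so as written your argument establishes a strictly weaker statement. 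Moreover this detour is unnecessary: the ICN scheme conserves $|\m_h^n|$ pointwise and dissipates the discrete energy, so $\|\m_h^n\|_\infty=\|\m_h^0\|_\infty$ and $\|\nabla_h\m_h^n\|_2\le\|\nabla_h\m_h^0\|_2$ for any solution of \eqref{eq_length}; these built-in structural bounds (which is precisely why $C_e$ depends on $\|\nabla_h\m_h^0\|_2^2$) are what the cited proof leans on, not a Sobolev bootstrap on the error.

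Second, the stability step is underdeveloped exactly where the difficulty lies. After the splitting, the equation tested with $-\Delta_h\e^{n+\frac12}$ still contains terms in which $\Delta_h\e^{n+\frac12}$ is paired with coefficients involving $\e^{n+\frac12}$ and $\Delta_h\m_h^{n+\frac12}$ (the latter is not bounded in $L^\infty$ a priori). These terms are not controlled by $\|\e\|_{H^1_h}$, and the dissipative contribution $\alpha\|\m_h^{n+\frac12}\times\Delta_h\e^{n+\frac12}\|_2^2$ only controls the component of $\Delta_h\e^{n+\frac12}$ orthogonal to $\m_h^{n+\frac12}$, so it cannot absorb them by itself. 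Closing the estimate requires arranging the splitting so that $\Delta_h$ falls on the exact solution wherever possible, summing by parts to move discrete derivatives onto the smooth coefficients, and then using the first inequality of Lemma \ref{dislemma} together with the higher regularity of $\m_e$ (this is where the dependence of $C_e$ on $\|\partial_x\m_e(\cdot,0)\|_{H^3}$ and the smallness condition on $k$ actually originate). Your assertion that ``Cauchy--Schwarz and Lemma \ref{dislemma} apply directly'' skips this step, and without it the claimed Gronwall inequality does not yet follow.
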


\subsection{Semi-implicit Crank-Nicolson scheme}
It is worth mentioning that at each time step, a nonlinear system of equations needs to be solved in \eqref{eq12}. Moreover, at each iteration, a linear system of equations with non-symmetric and variable coefficients has to be solved in Strategy (b) and Strategy (c). To overcome this subtle difficulty, we approximate the nonlinear terms in front of the Laplacian operator by using available data from previous time steps (one-sided interpolation), with the same accuracy order as in the ICN scheme. Precisely, for \eqref{eq12}, we have
\begin{align*}
\frac{{\m}_h^{n+1}-{\m}_h^{n}}{k}
=& -\hat{\m}_h^{n+\frac{1}{2}}\times\big(\epsilon \Delta_h\m_h^{n+\frac{1}{2}} +\hat{\f}_h^{n+\frac{1}{2}} \big) \nonumber\\
&  -\alpha\hat{\m}_h^{n+\frac{1}{2}}\times\left(\hat{\m}_h^{n+\frac{1}{2}}\times(\epsilon \Delta_h\m_h^{n+\frac{1}{2}} +\hat{\f}_h^{n+\frac{1}{2}} ) \right),
\end{align*}
where
\begin{align*}
    \hat{\m}_h^{n+\frac12} = \frac{3{\m}_h^{n}-{\m}_h^{n-1}}{2} \label{m_hat}\qquad\textrm{and}\qquad 
    \hat{\f}_h^{n+\frac12} = \frac{3{\f}_h^{n}-{\f}_h^{n-1}}{2}.
\end{align*}
Such a scheme does not preserve the length of magnetization. Therefore, we add a projection step and obtain the following SICN scheme for \eqref{eq10}: 
\begin{equation}\label {SICNs}
\left\{ 
\begin{aligned}
\frac{{\m}_h^{n+1,*}-{\m}_h^{n}}{k} & =  -\hat{\m}_h^{n+\frac{1}{2}}\times\big(\epsilon \Delta_h\m_h^{n+\frac{1}{2},*} +\hat{\f}_h^{n+\frac{1}{2}} \big) \\
&\quad -\alpha\hat{\m}_h^{n+\frac{1}{2}}\times\left(\hat{\m}_h^{n+\frac{1}{2}}\times(\epsilon \Delta_h\m_h^{n+\frac{1}{2},*} +\hat{\f}_h^{n+\frac{1}{2}} ) \right), \\
\m_h^{n+1} & = \frac{{\m}_h^{n+1,*}}{ |{\m}_h^{n+1,*}| },
\end{aligned}
\right.
\end{equation} 
where ${\m}_h^{n+1,*}$ is the intermediate magnetization and $ \displaystyle \m_h^{n+\frac{1}{2},*}  = \frac{{\m}_h^{n+1,*}+{\m}_h^{n}}{2}$ .

\noindent \textit{Unconditionally unique solvability of the SICN scheme}

For ease of notation, we drop the temporal indices and rewrite the linear system of \eqref{SICNs} in a compact form as
\begin{equation}\label{eq_comp}
	(2I + k\epsilon \hat{\bm{m}}_h \times \Delta_h + \alpha k \epsilon \hat{\bm{m}}_h \times (\hat{\bm{m}}_h \times \Delta_h) ) \m_h = \p_h,
\end{equation}
where $\p_h$, and $\hat{\bm{m}}_h$ are given.

\begin{theorem}\label{thm:SICNsolvability}
	Given $\p_h$ and $\hat{\m}_h $, the numerical scheme \eqref{eq_comp} admits a unique solution.
\end{theorem}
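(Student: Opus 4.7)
The plan is to exploit the finite-dimensionality of the problem: \eqref{eq_comp} is a square linear system, so unique solvability is equivalent to showing that the homogeneous equation (setting $\p_h = 0$) admits only the trivial solution $\m_h \equiv 0$.

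The key step is to test the homogeneous version of \eqref{eq_comp} against $\Delta_h \m_h$ in the discrete inner product. Three terms appear, and each should simplify cleanly. The term $\langle 2\m_h, \Delta_h \m_h\rangle$ reduces to $-2\|\nabla_h \m_h\|_2^2$ by discrete summation by parts, with boundary contributions killed by the Neumann conditions \eqref{neumann}. The gyroscopic term $k\epsilon\langle \hat{\m}_h \times \Delta_h \m_h, \Delta_h \m_h\rangle$ vanishes pointwise via the scalar triple product $(\bm{a}\times\bm{b})\cdot\bm{b}\equiv 0$. Finally, by the third identity in Lemma \ref{dislemma}, the damping term $\alpha k\epsilon\langle \hat{\m}_h \times (\hat{\m}_h \times \Delta_h \m_h), \Delta_h \m_h\rangle$ collapses to $-\alpha k\epsilon \|\hat{\m}_h \times \Delta_h \m_h\|_2^2$. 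Assembling these yields
$$ 2\|\nabla_h \m_h\|_2^2 + \alpha k \epsilon \|\hat{\m}_h \times \Delta_h \m_h\|_2^2 = 0. $$

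Since both summands are non-negative and $\alpha, k, \epsilon > 0$, each must vanish individually. In particular $\|\nabla_h \m_h\|_2 = 0$, which forces $\m_h$ to be a constant grid function. Substituting back into the homogeneous equation annihilates the Laplacian terms, leaving $2\m_h = 0$ and hence $\m_h \equiv 0$. Injectivity on a finite-dimensional space then delivers both existence and uniqueness for any $\p_h$, with no restriction on $k$ or $h$ — exactly the contrast promised against the conditional solvability of Theorem \ref{thm:ICNsolvability}.

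The argument is short and involves no delicate estimates; the main obstacle is simply selecting the right test function. Testing against $\m_h$ itself, which is the most natural first attempt, fails because the term $\hat{\m}_h \times \Delta_h \m_h$ does not pair cleanly with $\m_h$. Testing against $\Delta_h \m_h$ is what triggers all three identities of Lemma \ref{dislemma} simultaneously and reduces the equation to a sum of manifestly non-negative squared norms. This is a structural feature of how the coefficients in \eqref{SICNs} are laid out, and it is the reason the semi-implicit discretization trades conservation of $|\m|$ for unconditional solvability of the linear solve.
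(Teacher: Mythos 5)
Your proof is correct, and it takes a genuinely different route from the paper. You exploit linearity and finite-dimensionality directly: testing the homogeneous system against $\Delta_h \m_h$, the gyroscopic term drops by the scalar triple product, the damping term becomes $-\alpha k\epsilon\|\hat{\m}_h\times\Delta_h\m_h\|_2^2$ by the third identity of Lemma~\ref{dislemma}, and summation by parts under \eqref{neumann} turns the identity term into $-2\|\nabla_h\m_h\|_2^2$; the resulting sum of nonnegative squares forces $\m_h$ to be constant, and substituting back kills the Laplacian terms and gives $\m_h\equiv 0$, so injectivity (hence bijectivity) follows with no condition on $k$, $h$. The paper instead changes variables to $\q_h=-\Delta_h\m_h$, carefully tracks the lost constant mode through the correction $C^{\ast}_{\q_h}$, establishes a strict monotonicity estimate $\langle G(\q_{1,h})-G(\q_{2,h}),\q_{1,h}-\q_{2,h}\rangle \ge 2\|\tilde{\q}_h\|_{-1}^2$, and invokes the Browder--Minty lemma, following the template of Theorem~2.1 in the cited reference \cite{Chen2021}. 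The two arguments hinge on the same two cross-product facts (orthogonality of $\bm{a}\times\bm{b}$ to $\bm{b}$ and negative semi-definiteness of the double cross term), but your energy/injectivity argument is more elementary and avoids both the $(-\Delta_h)^{-1}$ bookkeeping and the mean-value correction; the paper's monotonicity framework is heavier than needed for a linear solve, though it is the form that generalizes to genuinely nonlinear schemes where injectivity-by-finite-dimension is unavailable. One small point to make explicit if you write this up: $\nabla_h\m_h=0$ implies $\m_h$ is constant because the grid is connected, and with the ghost-value Neumann conditions a constant grid function indeed satisfies $\Delta_h\m_h=0$, which is what licenses the final reduction to $2\m_h=0$.
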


\begin{proof}
For the unique solvability analysis for \eqref{eq_comp}, we denote $\q_h = -\Delta_h \m_h$. Note that $\overline{\q_h}=0$ under the Neumann boundary condition for $\m_h$. In general, $\m_h \neq (-\Delta_h)^{-1} \q_h$ since $\overline{\m_h} \neq 0$. Instead, we reformulate \eqref{eq_comp} as
\[\m_h = \frac{1}{2} \left(\p_h + k\epsilon \hat{\m}_h \times \q_h + \alpha k \epsilon \hat{\m}_h \times (\hat{\m}_h \times \q_h) \right)\] , 
and take an average on both sides. Therefore, $\m_h$ can be represented as follows:
\[\m_h = (-\Delta_h)^{-1} \q_h + C^{\ast}_{\q_h}  \ \textrm{with} \ C^{\ast}_{\q_h} = \frac{1}{2} \left(\overline{\p_h} + k\epsilon \overline{\hat{\m}_h \times \q_h}+ \alpha k \epsilon \overline{\hat{\m}_h \times(\hat{\m}_h \times \q_h)}  \right).\] 
In turn, \eqref{eq_comp} is rewritten as
\[G(\q_h) := 2 \left( (-\Delta_h)^{-1} \q_h + C_{\q_h}^{\ast} \right) -\p_h -k \epsilon \hat{\m}_h \times \q_h - \alpha k \epsilon \hat{\m}_h \times (\hat{\m}_h \times \q_h) = \bm{0}. \]

For any $\q_{1,h}, \q_{2,h}$ with $\overline{\q_{1,h}} = \overline{\q_{2,h}} = 0$, we denote $\tilde{\q}_h = \q_{1,h} - \q_{2,h}$ and derive the following monotonicity estimate:
\begin{align*}
\left\langle G\left(\bm{q}_{1, h}\right)-G\left(\bm{q}_{2, h}\right), \bm{q}_{1, h}-\bm{q}_{2, h}\right\rangle 
=&2\left(\left\langle\left(-\Delta_{h}\right)^{-1} \tilde{\bm{q}}_{h}, \tilde{\bm{q}}_{h}\right\rangle+\left\langle C_{\bm{q}_{1, h}}^{*}-C_{\bm{q}_{2, h}}^{*}, \tilde{\bm{q}}_{h}\right\rangle\right) \\
&-k \epsilon\left\langle\hat{\bm{m}}_{h} \times \tilde{\bm{q}}_{h}, \tilde{\bm{q}}_{h}\right\rangle-\alpha k \epsilon\left\langle\hat{\bm{m}}_{h} \times\left(\hat{\bm{m}}_{h} \times \tilde{\bm{q}}_{h}\right), \tilde{\bm{q}}_{h}\right\rangle \\
\geq& 2\left(\left\langle\left(-\Delta_{h}\right)^{-1} \tilde{\bm{q}}_{h}, \tilde{\bm{q}}_{h}\right\rangle+\left\langle C_{\bm{q}_{1, h}}^{*}-C_{\bm{q}_{2, h}}^{*}, \tilde{\bm{q}}_{h}\right\rangle\right) \\
=&2\left\langle\left(-\Delta_{h}\right)^{-1} \tilde{\bm{q}}_{h}, \tilde{\bm{q}}_{h}\right\rangle=2\left\|\tilde{\bm{q}}_{h}\right\|_{-1}^{2} \geq 0 .
\end{align*}
Note that the following equality and inequality have been used in the second step:
\[
\left\langle\hat{\bm{m}}_{h} \times \tilde{\bm{q}}_{h}, \tilde{\bm{q}}_{h}\right\rangle=0, \quad\left\langle\hat{\bm{m}}_{h} \times\left(\hat{\bm{m}}_{h} \times \tilde{\bm{q}}_{h}\right), \tilde{\bm{q}}_{h}\right\rangle \leq 0.
\]
The third step is based on the fact that $\left\langle C_{\bm{q}_{1, h}}^{*}-C_{\bm{q}_{2, h}}^{*}, \tilde{\bm{q}}_{h}\right\rangle=0$ since both $C_{\bm{q}_{1, h}}^{*}$ and $C_{\bm{q}_{2, h}}^{*}$ are constants and $\overline{\bm{q}_{1, h}}=\overline{\bm{q}_{2, h}}=0$.
Furthermore, for any $\bm{q}_{1, h}, \bm{q}_{2, h}$ with $\overline{\bm{q}_{1, h}}=\overline{\bm{q}_{2, h}}=0,$ we have
\[
\left\langle G\left(\bm{q}_{1, h}\right)-G\left(\bm{q}_{2, h}\right), \bm{q}_{1, h}-\bm{q}_{2, h}\right\rangle \geq 2\left\|\tilde{\bm{q}}_{h}\right\|_{-1}^{2}>0 \quad \text { if } \bm{q}_{1, h} \neq \bm{q}_{2, h},
\]
and the equality only holds when $\bm{q}_{1, h}=\bm{q}_{2, h}$. Finally, an application of the Browder-Minty lemma \cite{Browder1965,Minty1963} implies a unique solution of the SICN scheme. The proof here mainly follows the same way as Theorem 2.1 in \cite{Chen2021}.
\end{proof}

\noindent\textit{Convergence analysis of the SICN scheme}

\begin{theorem}\label{thm:convergenceSICN}
Let $\m_e(\x, t)\in C^3 ([0,T]; C^0) \cap L^{\infty}([0,T]; C^4)$ be a smooth solution of \cref{LLG,eq7} with the initial data $\m_e (\x,0)$, and $\m_h$ be the numerical solution of the equation \eqref{eq_length} with the initial data ${\m}_h^0=\m_{e}(\x_h, 0)$ for the numerical grid $\x_h$. Define the error as $ \bm{e}^n =\m_h^n - \m_e(\x_h,t) $. Suppose that the initial error satisfies $\|\bm{e}^0 \|_{H_h^1} + \|\bm{e}^1 \|_{H_h^1}= \mathcal{O} (k^2 + h^2)$, and $k\leq \mathcal{C}h$. Then the following convergence result holds true as $h$ and $k$ go to zero:
	\begin{align*}
	\| \bm{e}^n  \|_{H^1_h} &\leq \mathcal{C}(k^2+h^2), \quad \forall\; n \ge 2 ,
	\end{align*}	
	in which the constant $\mathcal{C}>0$ is independent of $k$ and $h$.
\end{theorem}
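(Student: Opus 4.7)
The plan is to proceed by a standard consistency-plus-stability argument, with the main work being the nonlinear error estimate and the treatment of the projection step. First I would establish the consistency error: substitute the exact solution $\m_e$ into the first line of \eqref{SICNs} and Taylor expand. The Crank-Nicolson midpoint is second order in $k$, the one-sided interpolation $\hat{\m}_e^{n+\frac12} = (3\m_e^n - \m_e^{n-1})/2 = \m_e(t^{n+\frac12}) + \mathcal{O}(k^2)$ is also second order, and $\Delta_h$ is second order in $h$; since $|\m_e|\equiv 1$, the projection is the identity on the exact solution. Hence the local truncation error $\bm{\tau}^n$ is $\mathcal{O}(k^2+h^2)$ in $\|\cdot\|_\infty$ (and in $\|\cdot\|_{H_h^1}$ after discrete differentiation), under the stated regularity $\m_e \in C^3([0,T];C^0)\cap L^\infty([0,T];C^4)$.

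Next, I would subtract the consistency equation from \eqref{SICNs} to obtain an equation for the pre-projection error $\bm{e}^{n+1,*} := \m_h^{n+1,*} - \m_e^{n+1}$. The difference of the nonlinear terms splits into two types: those multiplied in front by $\hat{\m}_h^{n+\frac12}-\hat{\m}_e^{n+\frac12}$ (with $\epsilon\Delta_h \m_e^{n+\frac12} + \hat\f_e^{n+\frac12}$ as the other factor, which is uniformly $\ell^\infty$-bounded by the regularity of $\m_e$), and those multiplied in front by $\hat{\m}_e^{n+\frac12}$ acting on $\epsilon\Delta_h \bm{e}^{n+\frac12,*}$ plus the stray-field error in $\hat{\f}_h-\hat{\f}_e$ (which is controlled by $\|\hat{\bm{e}}^{n+\frac12}\|_2$ via boundedness of the convolution operator $\nabla^2 N\ast$). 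Then I would take the discrete inner product with $-\Delta_h \bm{e}^{n+\frac12,*}$ to obtain an $H_h^1$-type energy identity. The damping term gives a good $\alpha\epsilon\|\hat{\m}_h^{n+\frac12}\times \Delta_h \bm{e}^{n+\frac12,*}\|_2^2$ contribution, while the gyromagnetic term and the front-factor errors are bounded using Lemma~\ref{dislemma}, Cauchy-Schwarz, and Young's inequality, leaving at most $\mathcal{C}(\|\bm{e}^n\|_{H_h^1}^2 + \|\bm{e}^{n-1}\|_{H_h^1}^2 + \|\bm{e}^{n+1,*}\|_{H_h^1}^2)$ plus truncation contributions on the right.

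Then I would handle the projection. Since $|\m_e^{n+1}|=1$, one has $|\m_h^{n+1,*}|^2 - 1 = |\bm{e}^{n+1,*}|^2 + 2\bm{e}^{n+1,*}\cdot\m_e^{n+1}$, and a direct calculation gives
\begin{equation*}
\bm{e}^{n+1} = \frac{\m_h^{n+1,*}}{|\m_h^{n+1,*}|} - \m_e^{n+1} = \bm{e}^{n+1,*} + \m_h^{n+1,*}\frac{1-|\m_h^{n+1,*}|}{|\m_h^{n+1,*}|},
\end{equation*}
so that $\|\bm{e}^{n+1}\|_{H_h^1} \leq \|\bm{e}^{n+1,*}\|_{H_h^1} + \mathcal{C}\|\bm{e}^{n+1,*}\|_\infty \|\bm{e}^{n+1,*}\|_{H_h^1}$ as long as $\|\bm{e}^{n+1,*}\|_\infty$ is sufficiently small, using the product rule for $\nabla_h$ and smoothness of $\m_e$ together with the point-wise identity $|\m_h^{n+1}|=1$. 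Thus the projection step only adds a higher-order correction once $\ell^\infty$ control of $\bm{e}^{n+1,*}$ is in place.

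The main obstacle, and the reason for the constraint $k\leq \mathcal{C}h$, is closing the nonlinear argument by induction. I would assume by induction that $\|\bm{e}^m\|_{H_h^1}\leq \mathcal{C}_0(k^2+h^2)$ for all $m\leq n$, and therefore by the inverse inequality $\|\bm{e}^m\|_\infty \leq \mathcal{C} h^{-d/2}\|\bm{e}^m\|_{H_h^1} \leq \mathcal{C}\mathcal{C}_0 h^{-d/2}(k^2+h^2)$. Under $k\leq \mathcal{C}h$ this is $\mathcal{O}(h^{2-d/2})$, which tends to zero in $d\leq 3$ and in particular justifies both the smallness needed in the projection estimate and the uniform $\ell^\infty$ bound $\|\hat{\m}_h^{n+\frac12}\|_\infty \leq \|\m_e\|_\infty + 1$ used to control the front factors. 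Inserting these into the energy inequality, absorbing $\|\bm{e}^{n+1,*}\|_{H_h^1}$ on the left using the diffusive term, applying discrete Gronwall, and using the initial hypothesis $\|\bm{e}^0\|_{H_h^1}+\|\bm{e}^1\|_{H_h^1}=\mathcal{O}(k^2+h^2)$ closes the induction and yields the stated bound. The delicate point is that the constant produced by Gronwall must be reabsorbed into the same $\mathcal{C}_0$ that appears in the induction hypothesis; this is precisely where $k\leq \mathcal{C}h$ (rather than the much stronger $k\leq \mathcal{C}h^2$ needed by the ICN analysis) is both necessary and sufficient.
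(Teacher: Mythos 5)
Your overall strategy (second\--order consistency of the CN midpoint and the one\--sided extrapolation, an error equation tested against $-\Delta_h$ of the pre\--projection error, an induction argument with an inverse inequality under $k\le\mathcal{C}h$, and a discrete Gronwall step) is exactly the route the paper has in mind: the paper does not write the proof out but defers to \cite[Theorem 2.3]{Chen2021}, whose argument has this structure. However, your treatment of the projection step has a genuine gap, and it is precisely the ingredient the paper flags as the necessary supplement to the cited analysis (``a precise estimate between $\m_h^n$ and the intermediate magnetization $\m_h^{n,*}$'').

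First, the bound $\|\e^{n+1}\|_{H^1_h}\le(1+\mathcal{C}\|\e^{n+1,*}\|_\infty)\|\e^{n+1,*}\|_{H^1_h}$ does not follow from your decomposition by the triangle inequality: the radial correction $\m_h^{n+1,*}\,(1-|\m_h^{n+1,*}|)/|\m_h^{n+1,*}|$ is \emph{first} order in $\e^{n+1,*}$ (since $1-|\m_h^{n+1,*}|\approx-\m_e^{n+1}\cdot\e^{n+1,*}$), and one must invoke the near\--orthogonality of the radial and tangential parts (non\--expansiveness of the projection near the unit sphere) to recover a quadratic correction. Second, and more seriously, even granting that per\--step bound, the amplification accumulates over $\mathcal{O}(T/k)$ steps to $\exp\bigl(\mathcal{C}T\,\|\e^{\cdot,*}\|_\infty/k\bigr)$; with your inverse inequality $\|\e\|_\infty\le\mathcal{C}h^{-d/2}\|\e\|_{H^1_h}$ and only $k\le\mathcal{C}h$ (no lower bound on $k$) this product is not controlled --- in 3D with $k\sim h$ it is of size $\exp(\mathcal{C}h^{-1/2})$, and for $k\ll h^{3/2}$ it blows up in every dimension, even if one upgrades to the sharper embedding $\|\e\|_\infty\le\mathcal{C}h^{-1/2}\|\e\|_{H^1_h}$. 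To close the induction one needs the length defect $\bigl|\,|\m_h^{n+1,*}|-1\,\bigr|$ to carry an explicit factor of $k$ per step. This follows from the near length conservation of the CN update: take the pointwise dot product of the first equation of \eqref{SICNs} with $\m_h^{n+\frac12,*}$; the right\--hand side is orthogonal to $\hat\m_h^{n+\frac12}$, and $\m_h^{n+\frac12,*}-\hat\m_h^{n+\frac12}=\tfrac12\bigl[(\m_h^{n+1,*}-\m_h^{n})-(\m_h^{n}-\m_h^{n-1})\bigr]$ is a second difference of size $\mathcal{O}(k^2+\text{errors})$, so that $|\m_h^{n+1,*}|^2-|\m_h^{n}|^2=\mathcal{O}(k)\times(\text{small})$ with $|\m_h^{n}|=1$; the projection correction then contributes $\mathcal{O}(k)\times(\text{small})$ per step and its accumulation is harmless. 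Without this estimate your Gronwall constant cannot be reabsorbed into the induction constant under the stated hypothesis. A minor additional point: there is no ``diffusive term'' available for absorption --- the damping only controls $\|\hat\m_h^{n+\frac12}\times\Delta_h\e^{n+\frac12,*}\|_2^2$, not $\|\Delta_h\e^{n+\frac12,*}\|_2$ --- so the terms pairing front\--factor errors with $\Delta_h\e^{n+\frac12,*}$ or $\Delta_h\m_e$ must be handled by summation by parts (moving $\nabla_h$ onto the smooth factor via Lemma~\ref{dislemma}), and the resulting $\mathcal{C}k\|\e^{n+1,*}\|_{H^1_h}^2$ term is absorbed using smallness of $k$, not diffusion.
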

The proof of Theorem \ref{thm:convergenceSICN} basically follows the proof of \cite[Theorem 2.3]{Chen2021} and a precise estimate between $\bm{m}^n_h$ and the intermediate magnetization $\bm{m}^{n,*}_h$. The idea is pretty much the same while the entire procedure is tedious, so we omit the details here. Note that the convergence analysis for both the ICN and SICN schemes neglects the stray field in the effective field for technical simplification.

\section{Numerical examples}\label{sec:result}
In this section, we perform some numerical tests to investigate the accuracy and efficiency for the model problem of the LLG equation. For simplicity, we only consider the exchange field with $\epsilon=1$ in \eqref{heff} for the 1D and 3D cases. When solving nonlinear systems of equations in the ICN scheme, the maximum number of iterations in the Newton's method is set to be $MaxIter = 300$ with the tolerance to stop the iteration being $tol = 1e-12$. The error is recorded as the difference between the numerical and exact solutions $\left\|\bm{m}_h -\bm{m}_e\right\|_{\infty}$. Results in the discrete $L^2$ and $H^1$ norms are similar, so we do not include all the details here. For completeness, the number of iterations in the Newton's method per time step is recorded for the ICN scheme as well. 

In the 1D model, we choose $\bm{m}_{e}=(\cos (\bar{x}) \sin (t), \sin (\bar{x}) \sin (t), \cos (t))$ with $\bar{x}=x^{2}(1-x)^{2}$ as the exact solution over $ \Omega = [0,1] $, which satisfies
\begin{equation}\label{eqn:exact1D}
\bm{m}_{t}=-\bm{m} \times \bm{m}_{xx}-\alpha \bm{m} \times(\bm{m} \times \bm{m}_{xx})+\bm{g}.
\end{equation}
The forcing term turns out to be $\mathbf{g}=\bm{m}_{e t}+\bm{m}_{e} \times \bm{m}_{e x x}+\alpha \bm{m}_{e} \times\left(\bm{m}_{e} \times \bm{m}_{e x x}\right)$, and $\bm{m}_{e}$ satisfies the homogeneous Neumann boundary condition.

In the 3D model, we set the exact solution as 
\[
\bm{m}_{e}=(\cos (\bar{x} \bar{y} \bar{z}) \sin (t), \sin (\bar{x} \bar{y} \bar{z}) \sin (t), \cos (t)),
\]
over $\Omega=[0,1]^3$, which satisfies the homogeneous Neumann boundary condition and the following equation is valid: 
\begin{equation}\label{eqn:exact3D}
\bm{m}_{t}=-\bm{m} \times \Delta \bm{m}-\alpha \bm{m} \times(\bm{m} \times \Delta \bm{m})+\bm{g} , 
\end{equation}
with $\bar{x}=x^{2}(1-x)^{2}, \bar{y}=y^{2}(1-y)^{2}, \bar{z}=z^{2}(1-z)^{2}$ and $\bm{g}=\bm{m}_{e t}+\bm{m}_{e} \times \Delta \bm{m}_{e}+\alpha \bm{m}_{e} \times\left(\bm{m}_{e} \times \Delta \bm{m}_{e}\right)$ .

\subsection{Accuracy tests}
In the 1D computation, we fix $h = 1/2400$ and record the approximation errors in terms of the temporal step size $ k $ in \cref{tab:2}. It is clear that both the ICN and SICN schemes are second-order accurate in time. To get the spatial accuracy, we fix $ k  = 5e-7$ and record the approximation errors in terms of the spatial mesh size $h$ . It follows from \cref{tab:3} that both schemes are second-order accurate in space. The conclusions could be obtained the the 3D comoutation; see \cref{tab:6} with $ h = 0.025$ and \cref{tab:7} with $ k  = 1e-3 $.
\begin{table}[htp]
	\caption{Numerical errors of ICN and SICN schemes in terms of $ k $, with $ T = 1 $, $ h =1/2400 $, and $ \alpha = 0.00001 $ in the 1D computation.}
	\label{tab:2}
	\centering
	\begin{threeparttable}
		\begin{tabular}{l|cccccc}
			\Xhline{1.5pt}
			$ k $ & T/120 & T/130 & T/140 &T/150 & order \\
			\hline
			ICN scheme &3.3336e-06 &2.8530e-06 &  2.4718e-06 &2.1642e-06 & 1.9361\\
			\hline
			SICN scheme &2.9816e-06 &2.5549e-06&  2.2151e-06&1.9407e-06&1.9246 \\
			\hline
			Iterations\tnote{*} & 23 & 12 & 9 & 7 & / \\
			\Xhline{1.5pt}
		\end{tabular}
		\begin{tablenotes}
		\footnotesize
		\item[*] Iterations refer to the maximum number of iterations in the Newton's method over all temporal steps. The number of iterations remains almost a constant from step to step since the same initialization strategy is used. Linear systems at each iteration in the ICN method and the linear system in the SICN method are solved by the direct method. The same notation is used in \crefrange{tab:3}{tab:11}.
		\end{tablenotes}

	\end{threeparttable}
\end{table}
\begin{table}[htp]
	\caption{Numerical errors of ICN and SICN schemes in terms of $ h $, with $ T = 5e-2 $, $ k =5e-7 $, and $ \alpha = 0.00001 $ in the 1D computation.}
	\label{tab:3}
	\centering
	\begin{threeparttable}
		\begin{tabular}{l|cccccc}
			\Xhline{1.5pt}
			$ h $ & 1/50 & 1/60 & 1/70 &1/80 & order \\
			\hline
			ICN scheme &1.2036e-06 &8.3629e-07 &  6.1427e-07 &4.7025e-07 & 1.9997 \\
			\hline
			SICN scheme &1.2033e-06 & 8.3592e-07&  6.1387e-07&4.6984e-07 & 2.0010 \\
			\hline
			Iterations & 2 & 2 & 2 & 2 &/ \\
			\Xhline{1.5pt}
		\end{tabular}
	\end{threeparttable}
\end{table}

\begin{table}[htp]
	\centering
	\caption{Numerical errors of ICN and SICN schemes in terms of $ k $, with $ T = 1 $, $ h =0.025 $, and $ \alpha = 0.00001 $ in the 3D computation.}
	\label{tab:6}
	\begin{threeparttable}
		\begin{tabular}{l|ccccc}
			\Xhline{1.5pt}
			$ k $ & T/6 & T/8 & T/10 & T/12 & order  \\
			\hline
			ICN scheme &1.1133e-03  & 6.2644e-04 &  4.0121e-04 &  2.7886e-04 &1.9973 \\
			\hline
			SICN scheme &0.9353e-03 &5.5618e-04 &  3.6801e-04&2.6136e-04 & 1.8389\\
			\hline
			Iterations & 4 & 4 & 4 & 4 & / \\ 
			\Xhline{1.5pt}
		\end{tabular}
	\end{threeparttable}
\end{table}
\begin{table}[htp]
	\centering
	\caption{Numerical errors of ICN and SICN schemes in terms of $ h $, with $ T = 0.1 $, $ k =1e-3 $, and $ \alpha = 0.00001 $ in the 3D computation.}
	\label{tab:7}
	\begin{threeparttable}
		\begin{tabular}{l|cccccc}
			\Xhline{1.5pt}
			$ h $ & 1/10 & 1/12 & 1/14 &1/16 & order \\
			\hline
			ICN scheme &5.8310e-07 &4.1038e-07  &   3.0434e-07  &2.3454e-07 & 1.9375 \\
			\hline
			SICN scheme &5.8323e-07 & 4.1048e-07&  3.0443e-07&2.3462e-07 &1.9372 \\
			\hline
			Iterations & 3 & 3 & 3 & 3 & / \\ 
			\Xhline{1.5pt}
		\end{tabular}
	\end{threeparttable}
\end{table}

From \cref{tab:2} to \cref{tab:7}, we find that the semi-implicit nature of of the ICN scheme reduces the number of iterations from $2$ or more to $1$, at each time step, with almost the same numerical error. As a consequence, the SICN scheme reduces the CPU time by at least $50\%$, in comparison with the ICN scheme for the same accuracy requirement, as will be demonstrated in the next subsection.

\subsection{Efficiency comparison}
In \cref{fig:1d}, we plot the CPU time (in seconds) of the ICN and SICN schemes with respect to the numerical error $ \|\m_h - \m_e\|_\infty $ in the 1D and 3D computations, respectively. These results are consistent with the ones presented in \crefrange{tab:2}{tab:7}, which have demonstrated that the SICN scheme saves at least $50\%$ CPU time without an accuracy sacrifice. 
\begin{figure}[htbp]
	\centering
	\begin{minipage}[t]{0.49\textwidth}
		\centering
		\includegraphics[scale=0.4]{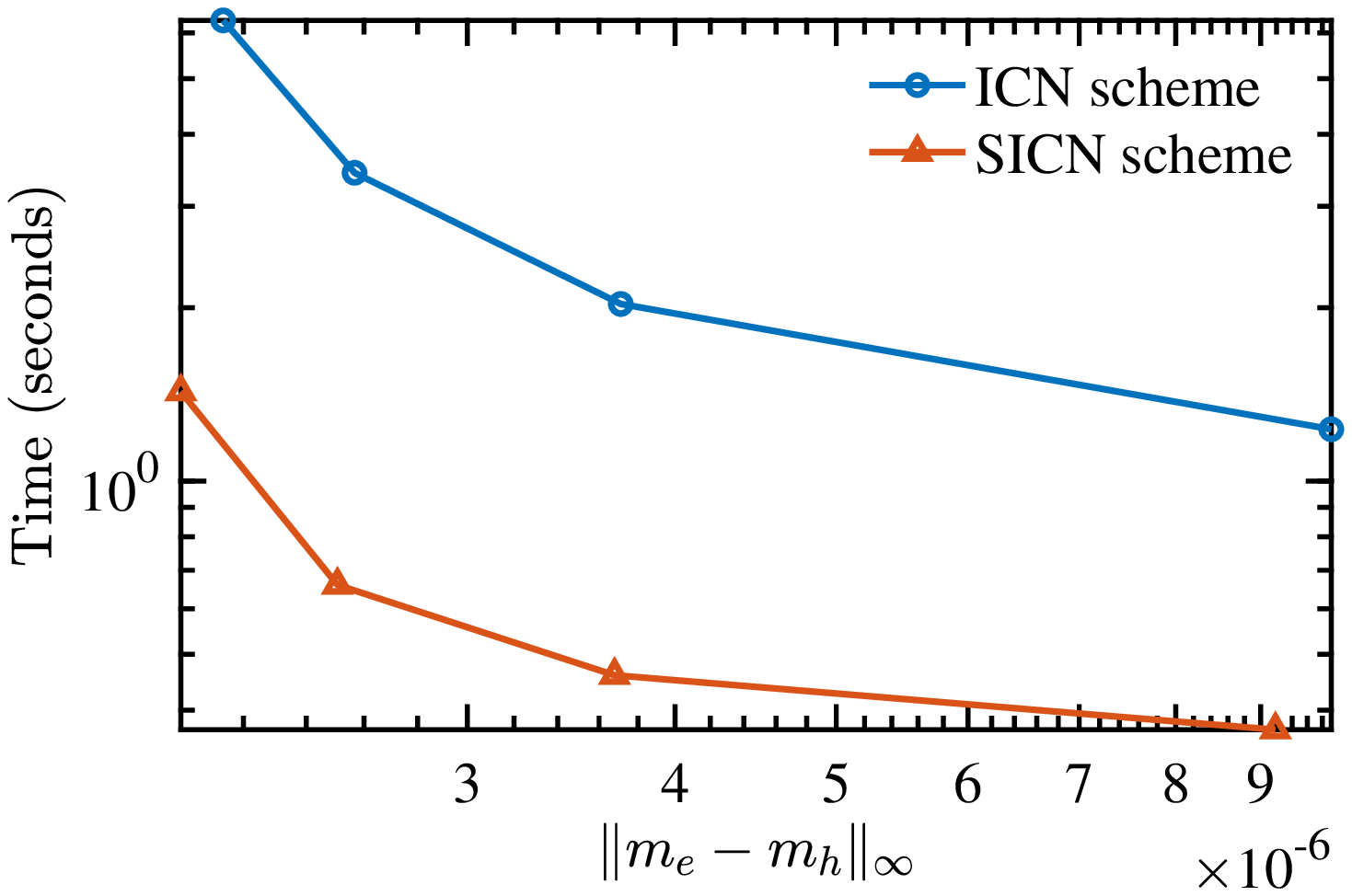}
		\centerline{(a) $ k $}
	\end{minipage}
	\begin{minipage}[t]{0.49\textwidth}
		\centering
		\includegraphics[scale=0.4]{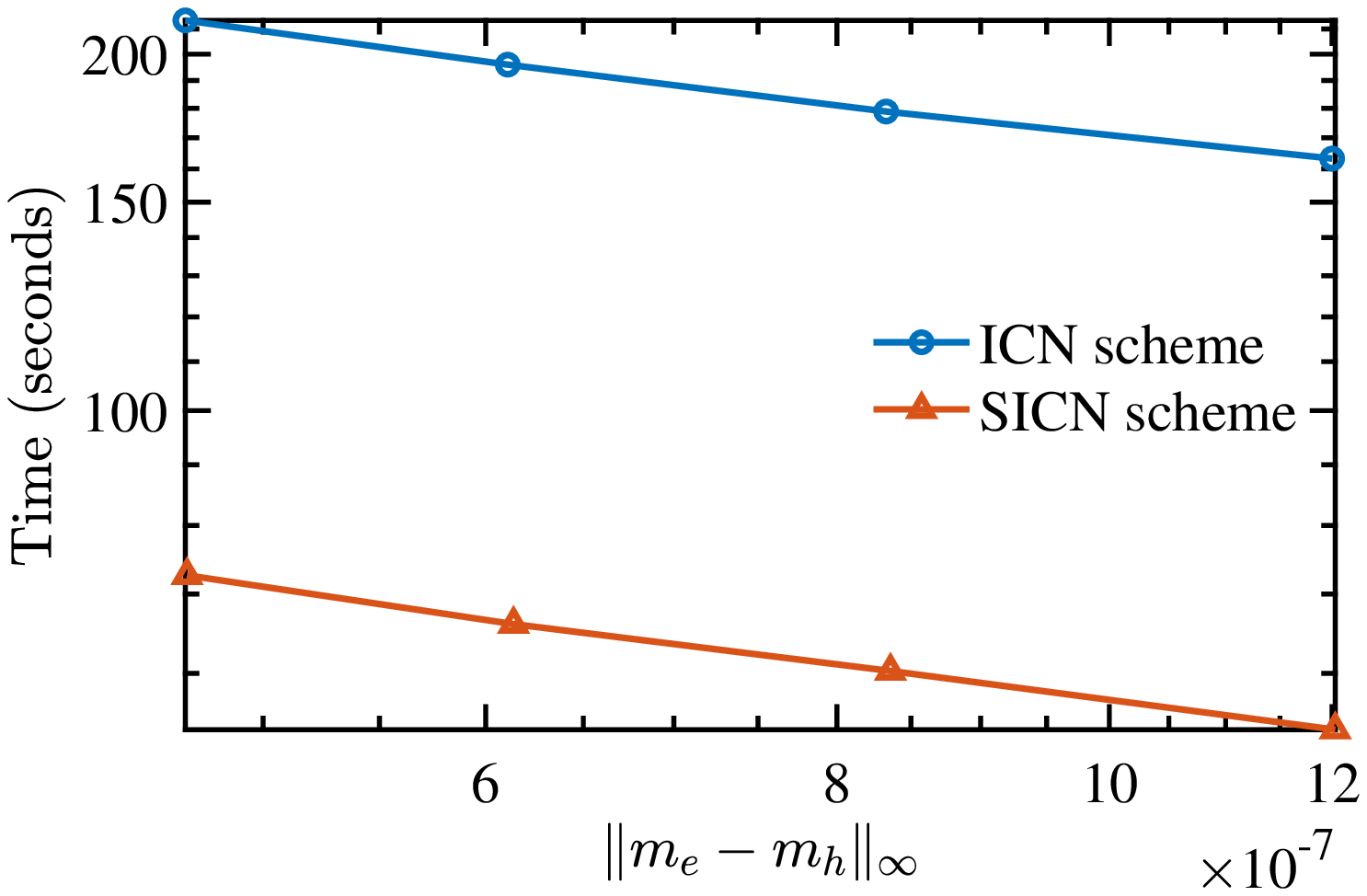}
		\centerline{(b) $ h $ }
	\end{minipage}
	\centering
\begin{minipage}[t]{0.49\textwidth}
	\centering
	\includegraphics[scale=0.4]{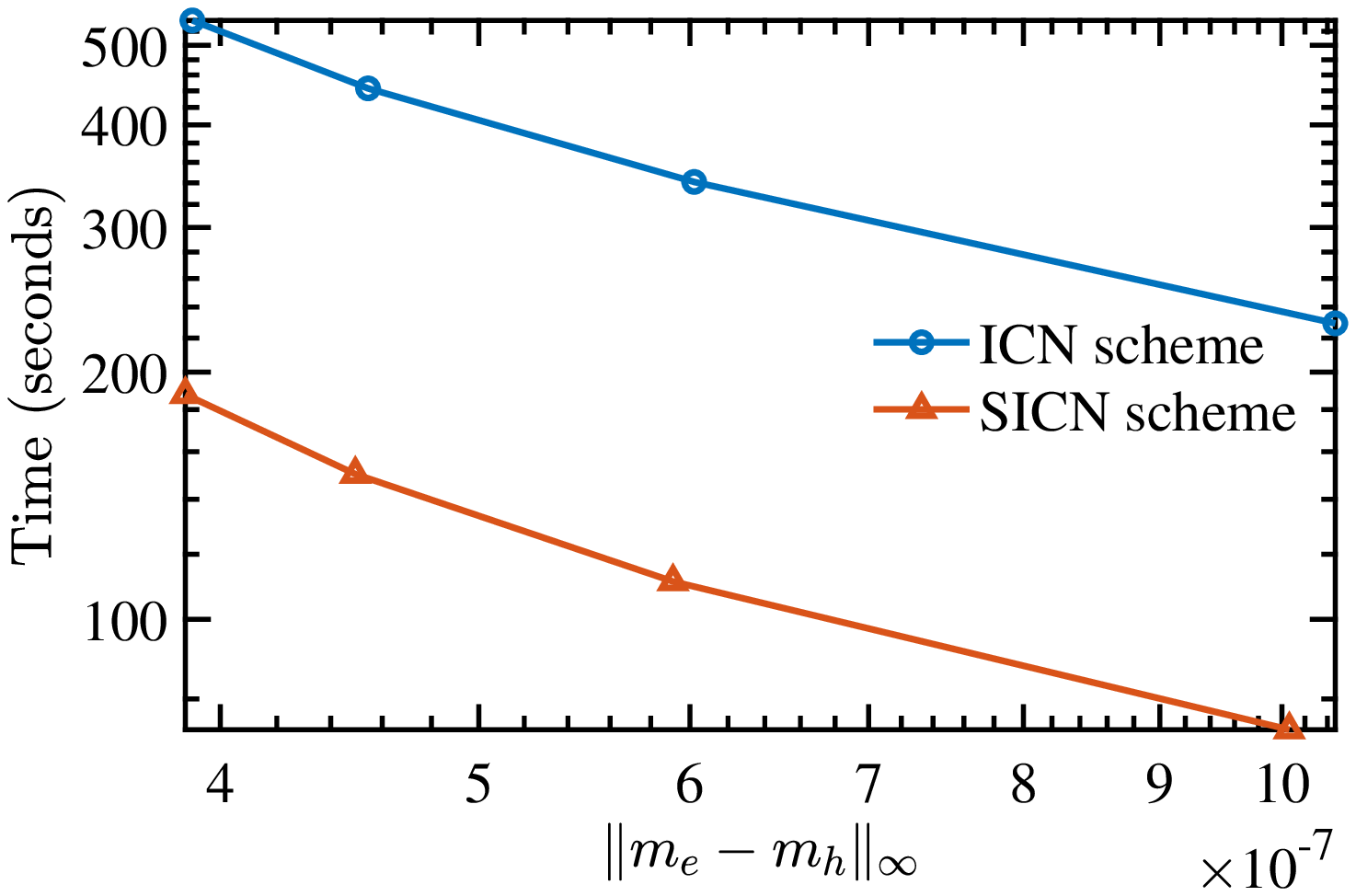}
	\centerline{(c) $ k $}
\end{minipage}
\begin{minipage}[t]{0.49\textwidth}
	\centering
	\includegraphics[scale=0.4]{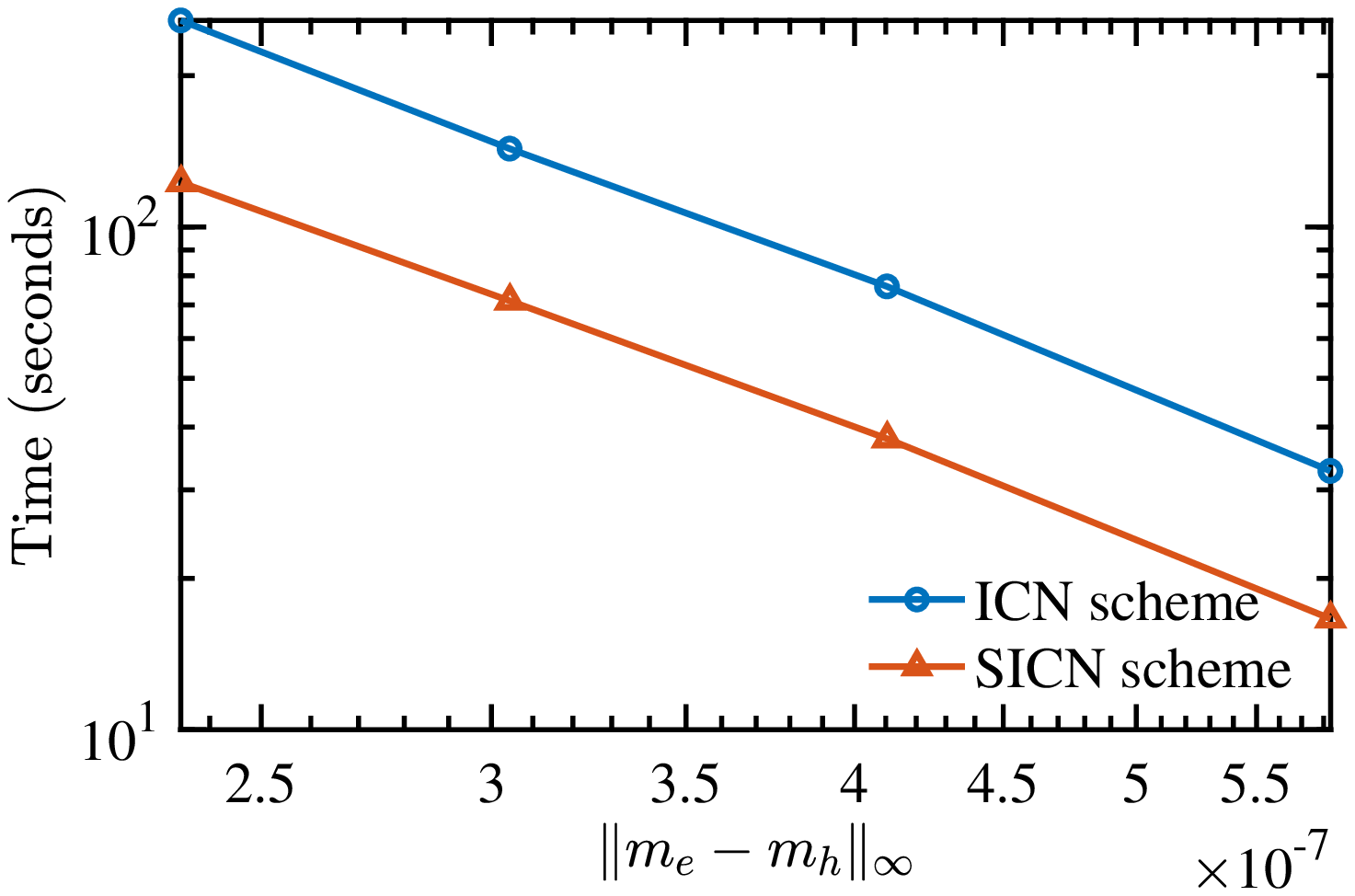}
	\centerline{(d) $ h $ }
\end{minipage}
	\caption{CPU time (in seconds) with respect to the numerical error by varying $k$ and $h$, respectively. Top row: the 1D case ; Bottom row: the 3D case.}
	\label{fig:1d}
\end{figure}

\subsection{Limitations of Newton's method in the ICN scheme}
The usage of gradient information allows Newton's method to converge quickly, while the convergence depends on the initial guess, mesh size, and time step size. 
Generally speaking, a larger time step size (relative to the mesh size) makes Newton's method more difficult to converge. 

For the 1D computation, in which the exact profile satisfies \eqref{eqn:exact1D}, we test the convergence of Newton's method with different setups. Results are recorded in \cref{tab:10}. In Cases 1 and 2, Newton's method does not converge with $\lambda=1$ but converges to the exact solution with a smaller $\lambda=0.1$. In Case 3, 
Newton's method converges with the initial guess $\m_h^n$. To test the convergence domain of Newton's method, a random unit vector field (uniform distribution on $[-1,1]$ with a projection onto $|\m|=1$) is used as the initial guess in Case 4. Approximately 25\% of the initial guesses have achieved the convergence.
\begin{table}[htp]
\centering
\caption{Convergence of Newton's method in the 1D model, with $T=1, tol=1e-12$, and $MaxIter=300$.}
\label{tab:10}
\begin{tabular}{c|c|c|c|c}
\Xhline{1.5pt}
Case   & 1      & 2      & 3  & 4   \\ \hline
$ (h,k) $ &  $(1/800,1/10)$ & $(1/800,1/10)$ & $(1/60,1e-3)$ & $(1/60,1e-3)$ \\ \hline
Initial guess & $\m_h^n$ & $\m_h^n$ & $\m_h^n$ & random \\ \hline
$\lambda$     & 1      & 0.1    & 1   & 1     \\ \hline
Convergence   & No     & Yes    & Yes & 25\% (by chance)    \\ \hline
Iterations & /      &  242 &   3 & $\sim 9$   \\ \hline
Numerical error         &  /      &  4.6992e-04      &   1.3559e-04  & /      \\ \Xhline{1.5pt}
\end{tabular}
\end{table}

For the 3D computation, in which the exact profile satisfies \eqref{eqn:exact3D}, we test the convergence of Newton's method with different setups. Results are recorded in \cref{tab:11}. In Case 5, Newton's method converges with the initial guess $\m_h^n$. To test the convergence domain of Newton's method, a random unit vector field (uniform distribution on $[-1,1]$ with a projection onto $|\m|=1$) is used as the initial guess in Case 6. Approximately 21\% of the initial guesses have achieved the convergence.
\begin{table}[htp]
\centering
\begin{threeparttable}
\caption{Convergence of Newton's method in the 3D model, with $T=1, tol=1e-12$, and $MaxIter=30$.}
\label{tab:11}
\begin{tabular}{c|c|c}
\Xhline{1.5pt}
Case     & 5  & 6   \\ \hline
$ (h,k) $ & $(1/10,1/55)$ & $(1/10,1/55)$ \\ \hline
Initial guess & $\m_h^n$ & random \\ \hline
$\lambda$     & 1   & 1     \\ \hline
Convergence   & Yes & 21\% (by chance)    \\ \hline
Iterations 	  &   3 & $\sim 10$   \\ \hline
Error         &   2.8369e-05  & /      \\ \Xhline{1.5pt}
\end{tabular}
\end{threeparttable}
\end{table}

The above results are obtained for large temporal step sizes, i.e., $k=1/10$. In the practical  applications, we usually use smaller step sizes to balance the temporal error and the spatial error. Therefore, the non-uniqueness of solutions in an implicit scheme or the divergence of Newton's method can be avoided; see \crefrange{tab:2}{tab:7}.

\section{Micromagnetics simulations}\label{sec:micromagnetics}
Micromagnetics simulations require the evaluation of the stray field \eqref{eqq-5}, which is often implemented by the fast Fourier transform \cite{Wang2001}. In an implicit scheme, several evaluations of the stray field are needed per time step. Therefore, it is inefficient to use the ICN scheme in this scenario. In this section, we use the SICN scheme to conduct micromagnetics simulations, including different stable structures and a benchmark problem from NIST. The non-symmetric linear systems of equations in the SICN scheme for full LLG equation are solved by GMRES from hypre \cite{Falgout2002}. 

\subsection{Equilibrium states}
\label{sec:states}

We apply the SICN scheme, with a spatial resolution $ 64\times 128 \times 1$, on a $ 1 \times 2  \times 0.02 \ \mathrm{\mu m}^3 $ thin-film element, with material parameters of Permalloy in \cref{tab1} and the damping parameter $\alpha=0.1$. A fine temporal step size $1$ picosecond (ps) is used to observe the evolution of magnetization distributions. In the absence of an external field, multiple meta-stable states are often observed in ferromagnets, both experimentally and numerically \cite{Nakatani1989, Zheng1997, Schrefl2003, Hertel1998}. Using the SICN scheme, we obtain four equilibrium states with four different initial magnetization distributions. They are Diamond state, Landau state, C-state, and S-state; see \cref{fig:state}.
\begin{figure}[htp]
\centering
\begin{minipage}[t]{0.49\textwidth}
\centering
\includegraphics[width=6cm]{./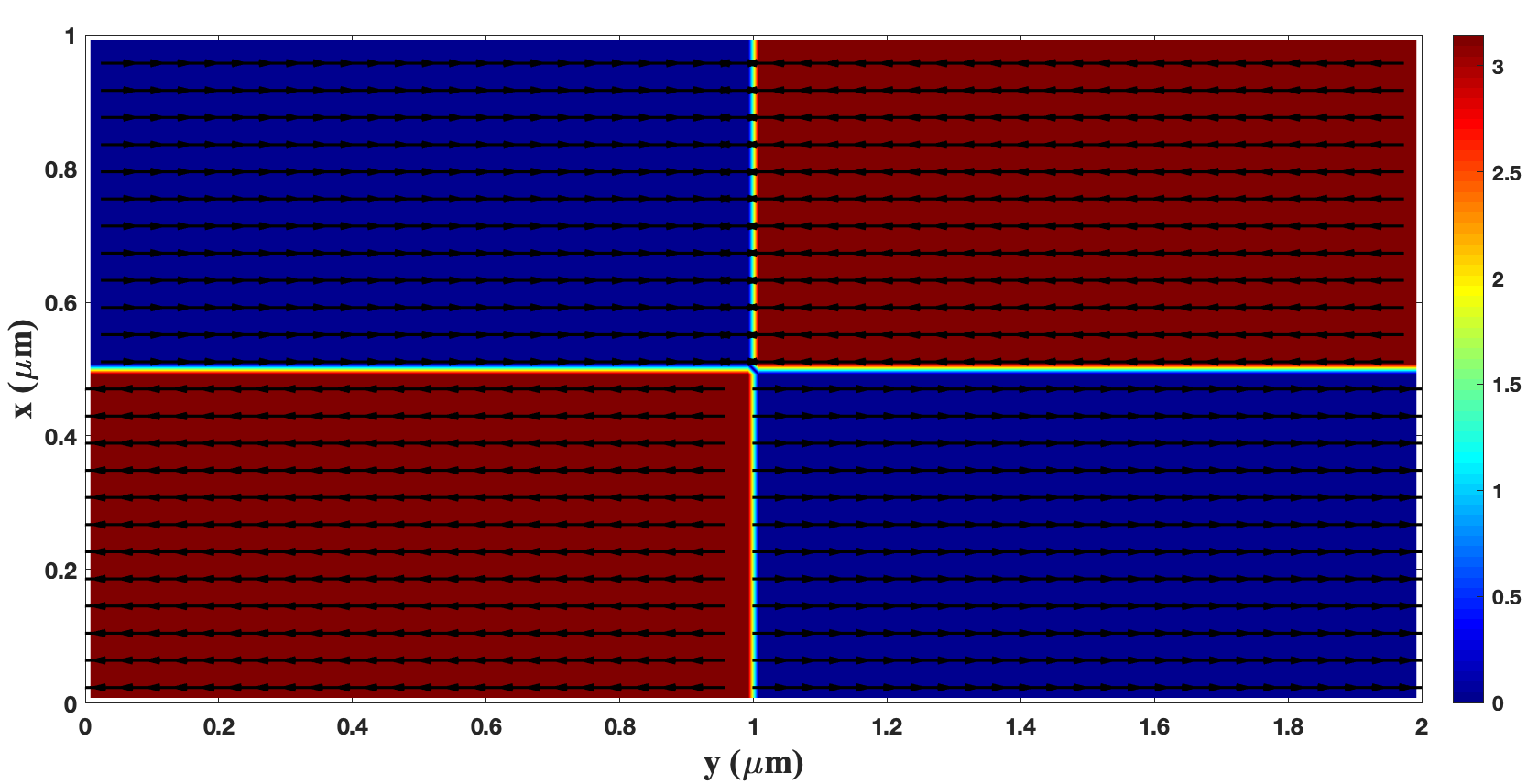}
\end{minipage}
\begin{minipage}[t]{0.49\textwidth}
\centering
\includegraphics[width=6cm]{./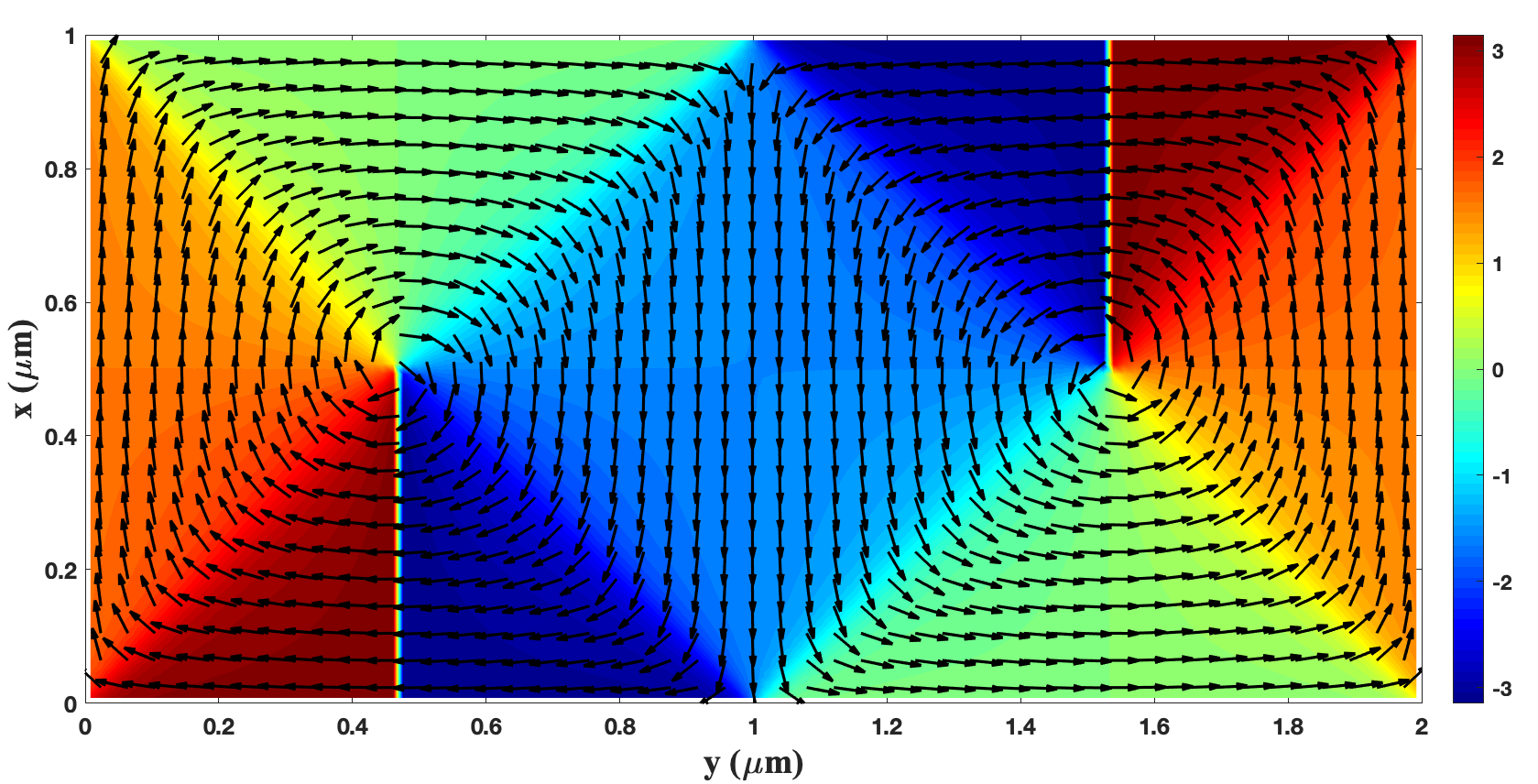}
\end{minipage}
\centering
\begin{minipage}[t]{0.49\textwidth}
\centering
\includegraphics[width=6cm]{./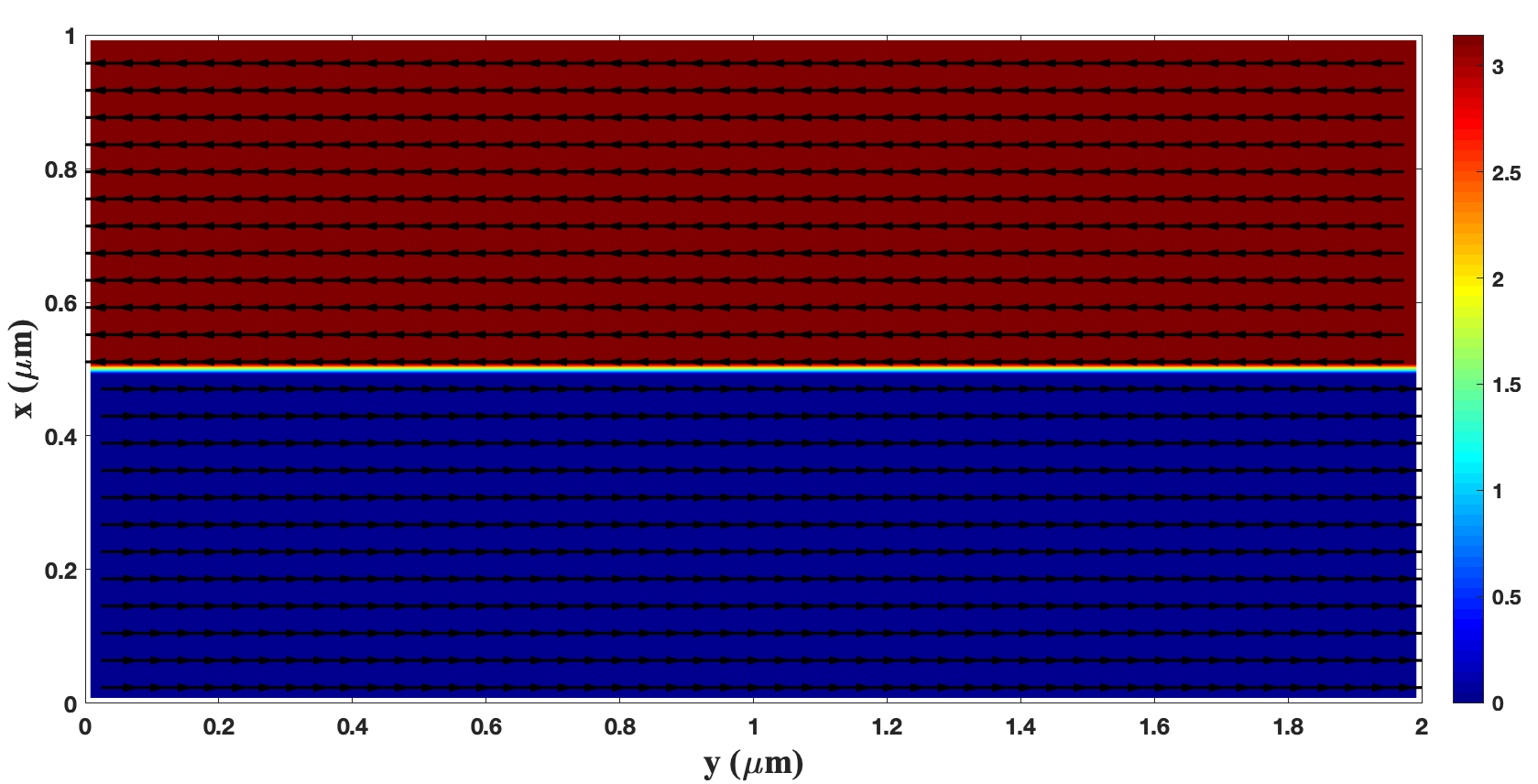}
\end{minipage}
\begin{minipage}[t]{0.49\textwidth}
\centering
\includegraphics[width=6cm]{./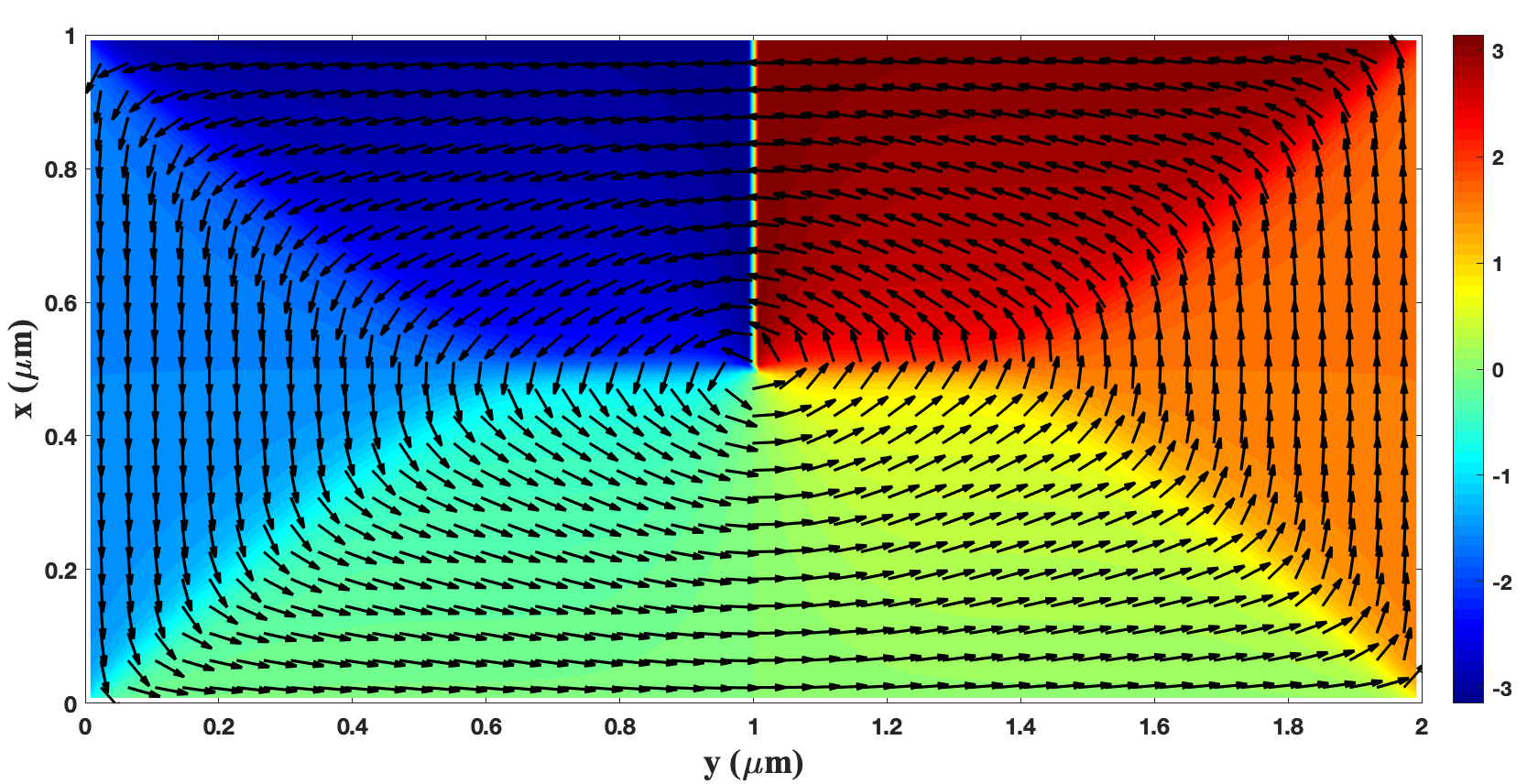}
\end{minipage}
\centering
\begin{minipage}[t]{0.49\textwidth}
\centering
\includegraphics[width=6cm]{./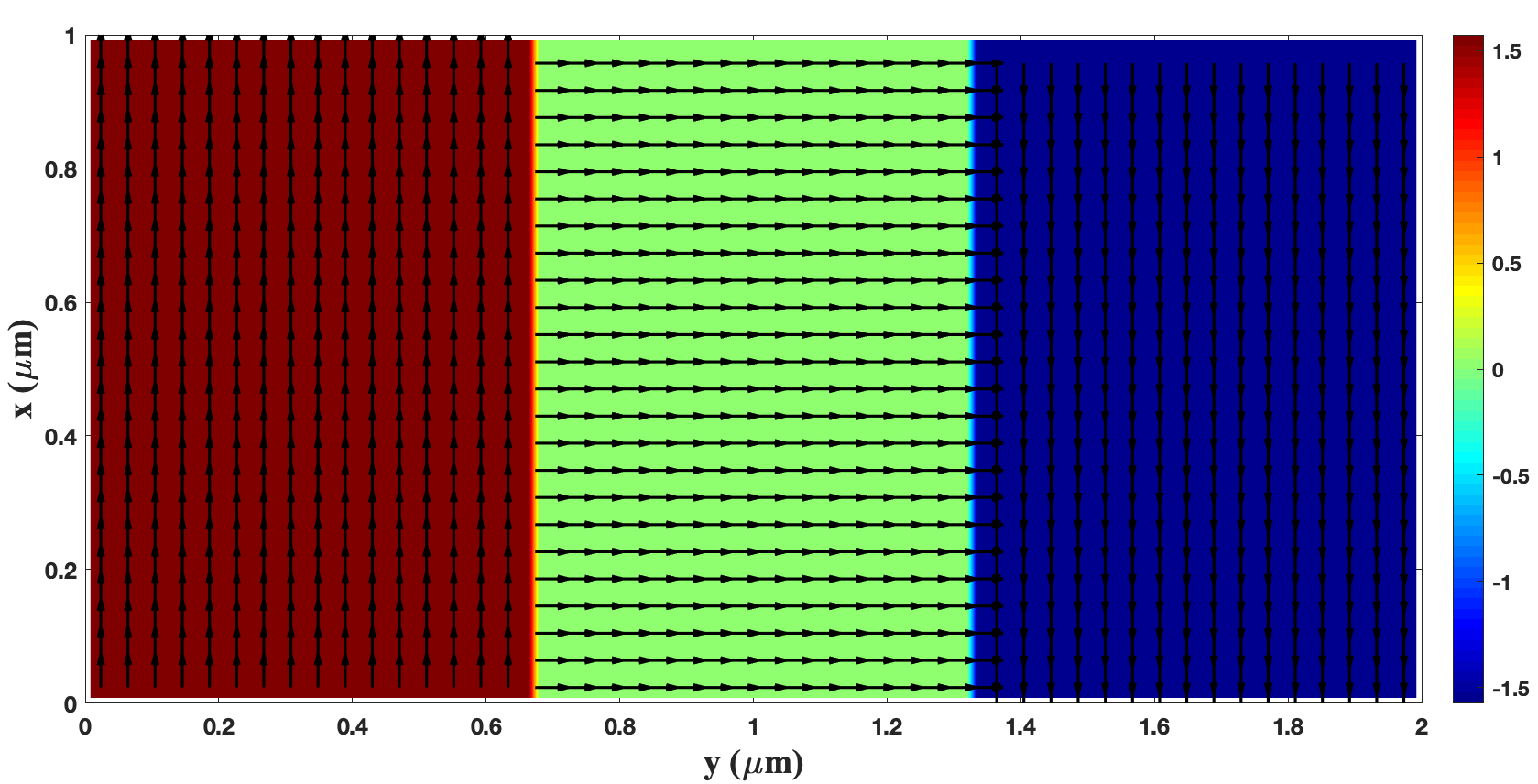}
\end{minipage}
\begin{minipage}[t]{0.49\textwidth}
\centering
\includegraphics[width=6cm]{./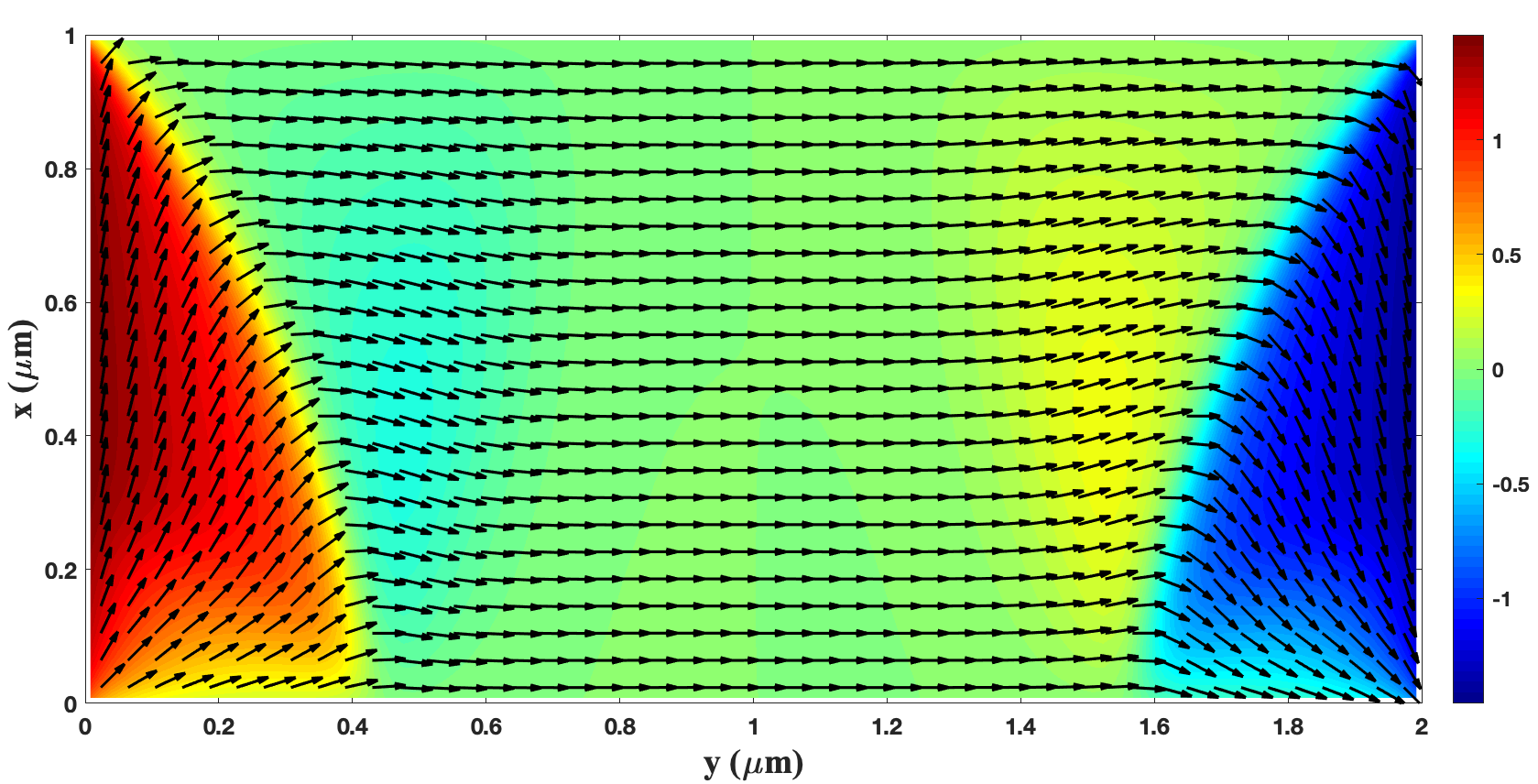}
\end{minipage}
\centering
\begin{minipage}[t]{0.49\textwidth}
\centering
\includegraphics[width=6cm]{./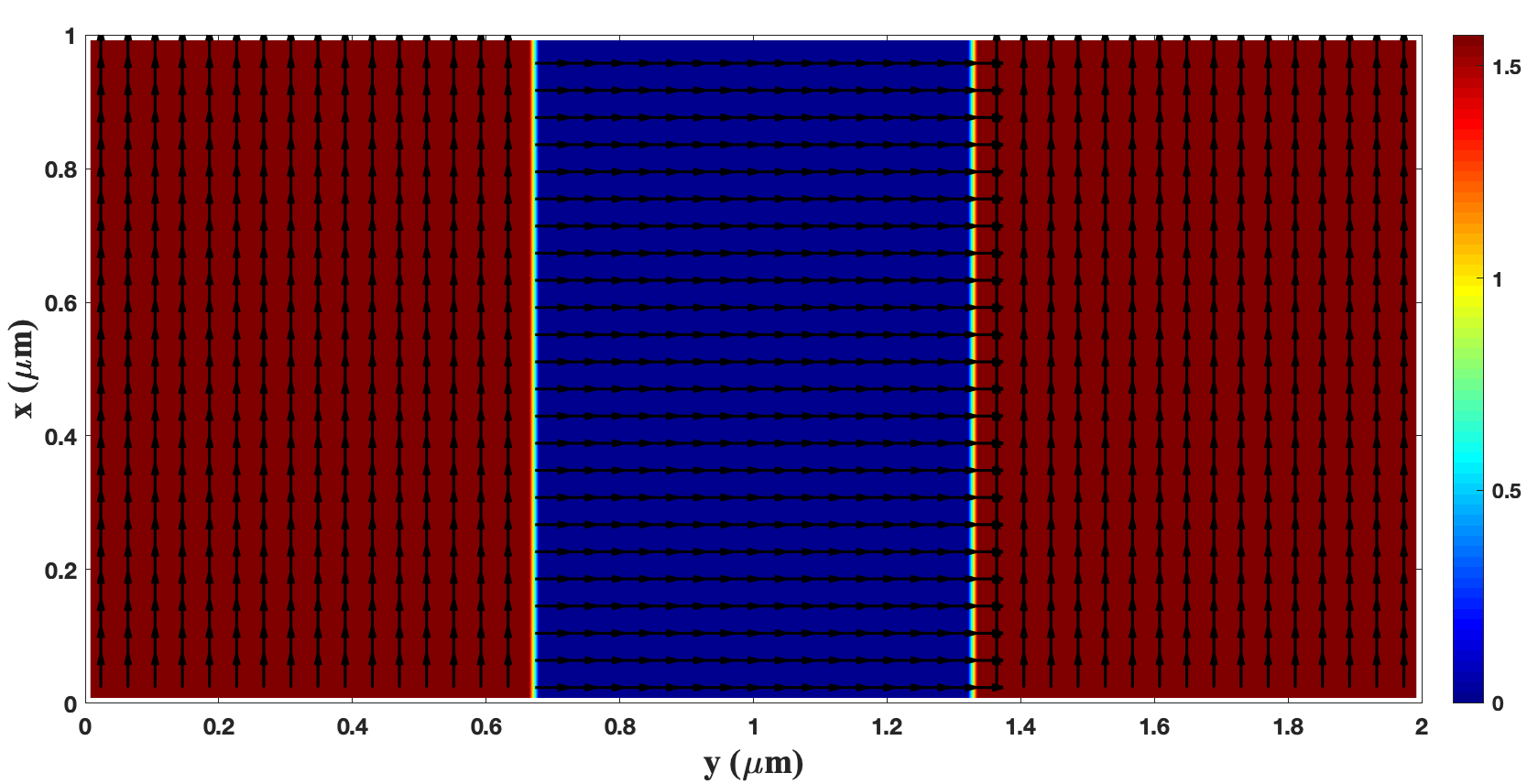}
\end{minipage}
\begin{minipage}[t]{0.49\textwidth}
\centering
\includegraphics[width=6cm]{./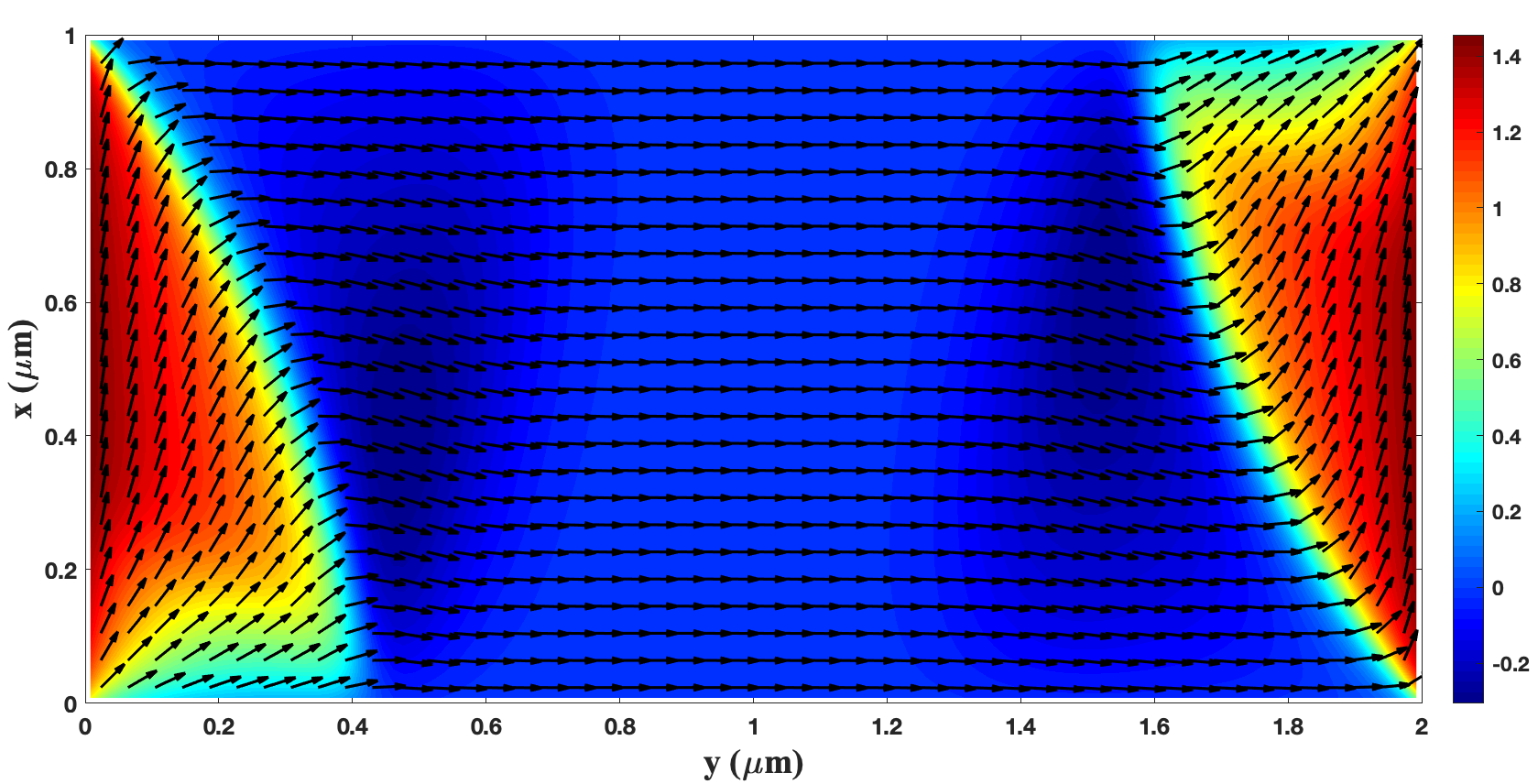}
\end{minipage}
\caption{Four equilibrium states simulated by the SICN scheme. The arrow denotes the first two components of the magnetization vector and the color denotes the angle between them. Top row: Diamond state; Second row: Landau state; Third row: C-state; Bottom row: S-state. Left column: Initial state; Right column: Equilibrium state.}\label{fig:state}
\end{figure}
For four different initial magnetization distributions, the time evolutionary curve of the system energy \eqref{LL-Energy} is plotted in \cref{fig:energy}, computed by the SICN scheme. A theoretical justification of the energy dissipation is not available, while such a dissipation is clearly observed in the numerical simulation. Landau state is found to be the most stable structure in this case.
\begin{figure}[hbt]
  \includegraphics[width=0.8\textwidth]{./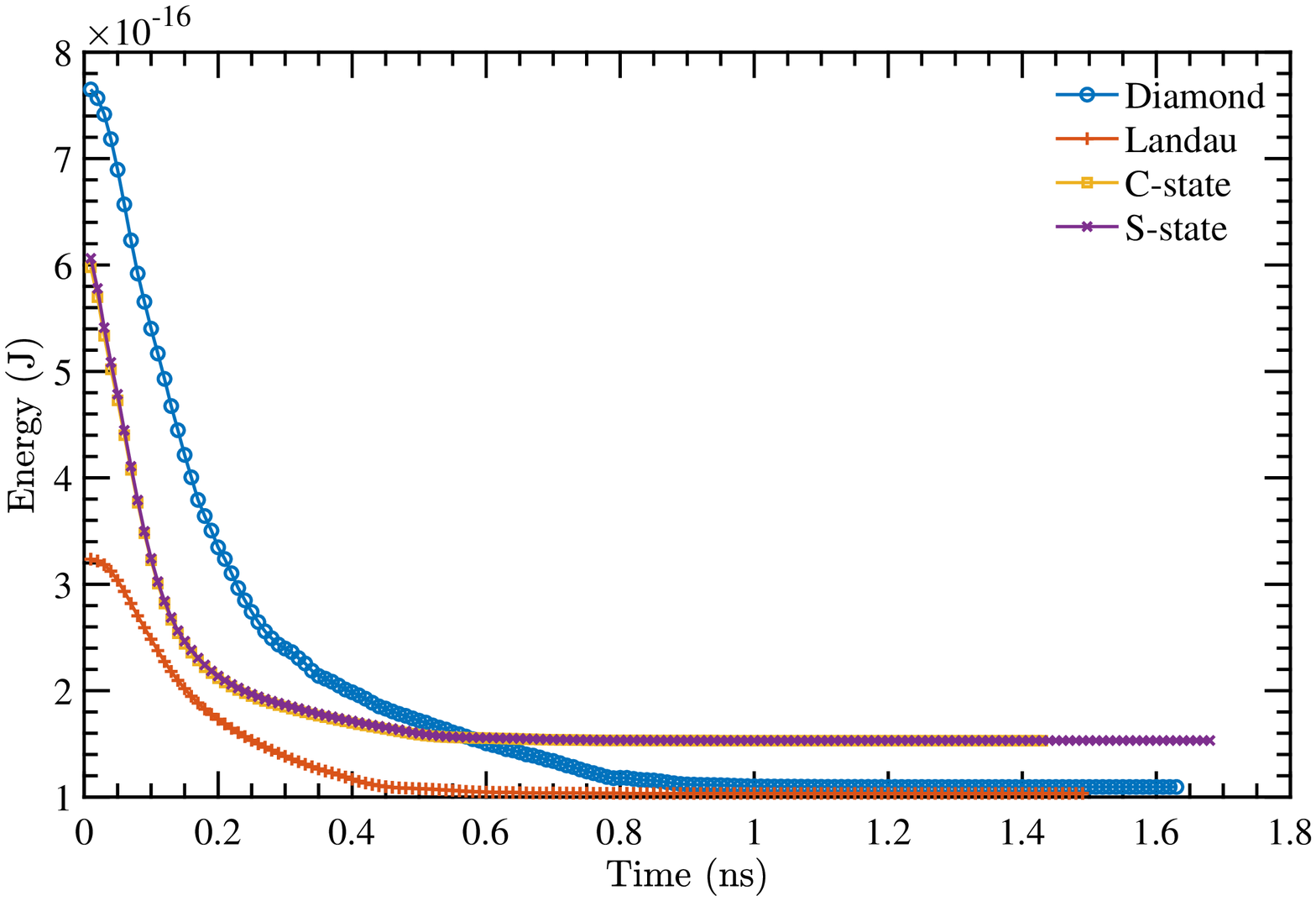}
  \caption{The time evolution of the system energy, computed by the SICN scheme. Energy dissipation is clearly observed in the numerical simulation.}
  \label{fig:energy}
\end{figure}

\subsection{Benchmark problem from NIST}

To check the practical performance of the SICN scheme in the realistic material, we simulate a standard problem proposed by the Micromagnetic Modeling Activity Group (muMag) from NIST \cite{NIST}. The hysteresis loop is simulated in the following way. A positive external field of strength $H_{0}=\mu_{0} H_{e}$ in the unit of $mT$ is applied. The magnetization is able to reach a steady-state. Once this steady-state is approached, the applied external field is reduced by a certain amount and the material sample is allowed to reach a steady state again. Repeat the process until the hysteresis system attains a negative external field of strength $H_{0}$. This process is then implemented in reverse, increasing the field in small steps until the initial applied external field is reached. Therefore, we are now able to plot the average magnetization at the steady-state as a function of the external field strength during the hysteresis loop. The stopping criterion for a steady state is that the relative change of the total energy is less than $10^{-9}$. The applied field is parallel to the $x$ axis. According to the available code \emph{mo96a} of the first standard problem from NIST, we set $ 100\times 50 \times 1 $ spatial resolution with mesh size $ 20 \times 20 \times 20 \ \mathrm{nm}^3  $ and the canting angle $ +1^\circ $ of applied field from the nominal axis. The initial state is uniform and $ [-50\;\mathrm{mT}, +50\;\mathrm{mT}] $ is split into 200 steps for both $x$-loop and $y$-loop. The material parameters and the temporal step size are the same as those in \cref{sec:states}.

Hysteresis loops generated by the code \emph{mo96a} are displayed in \cref{hloop}(a) and (b) when the applied field is approximately parallel to the $y$-(long) axis and the $x$-(short) axis, respectively. The average remanent magnetization in reduced units is given by $(-1.5120\times 10^{-1},8.6964\times 10^{-1},0)$ for the $y$-loop and $(1.5257\times 10^{-1},8.6870\times 10^{-1},0)$ for the $x$-loop. The coercive fields are 4.8871 mT in \cref{hloop}(a) and -2.5253 mT in \cref{hloop}(b). Hysteresis loops generated by the SICN scheme are presented in \cref{hloop}(c) and (d) when the applied field is approximately parallel to the long axis and the short axis, respectively. The average remanent magnetization in reduced units is given by $ (-1.5473\times 10^{-1},8.7251\times 10^{-1},1.9726\times 10^{-5}) $ for the $y$-loop and $ (1.5525\times 10^{-1},8.7201\times 10^{-1},2.1583\times 10^{-5}) $ for the $x$-loop. The coercive fields are $ 6.0837(\pm0.4)$ mT in \cref{hloop}(c) and $-2.6701 (\pm 0.4) $mT in \cref{hloop}(d). Therefore, we conclude that the results of the SICN scheme agree well with those of NIST both qualitatively and quantitatively.
\begin{figure}[htp]
\centering 
\subfigure[$ H_0 \parallel \text{y-axis},mo96a$]
{
\includegraphics[scale=0.2]{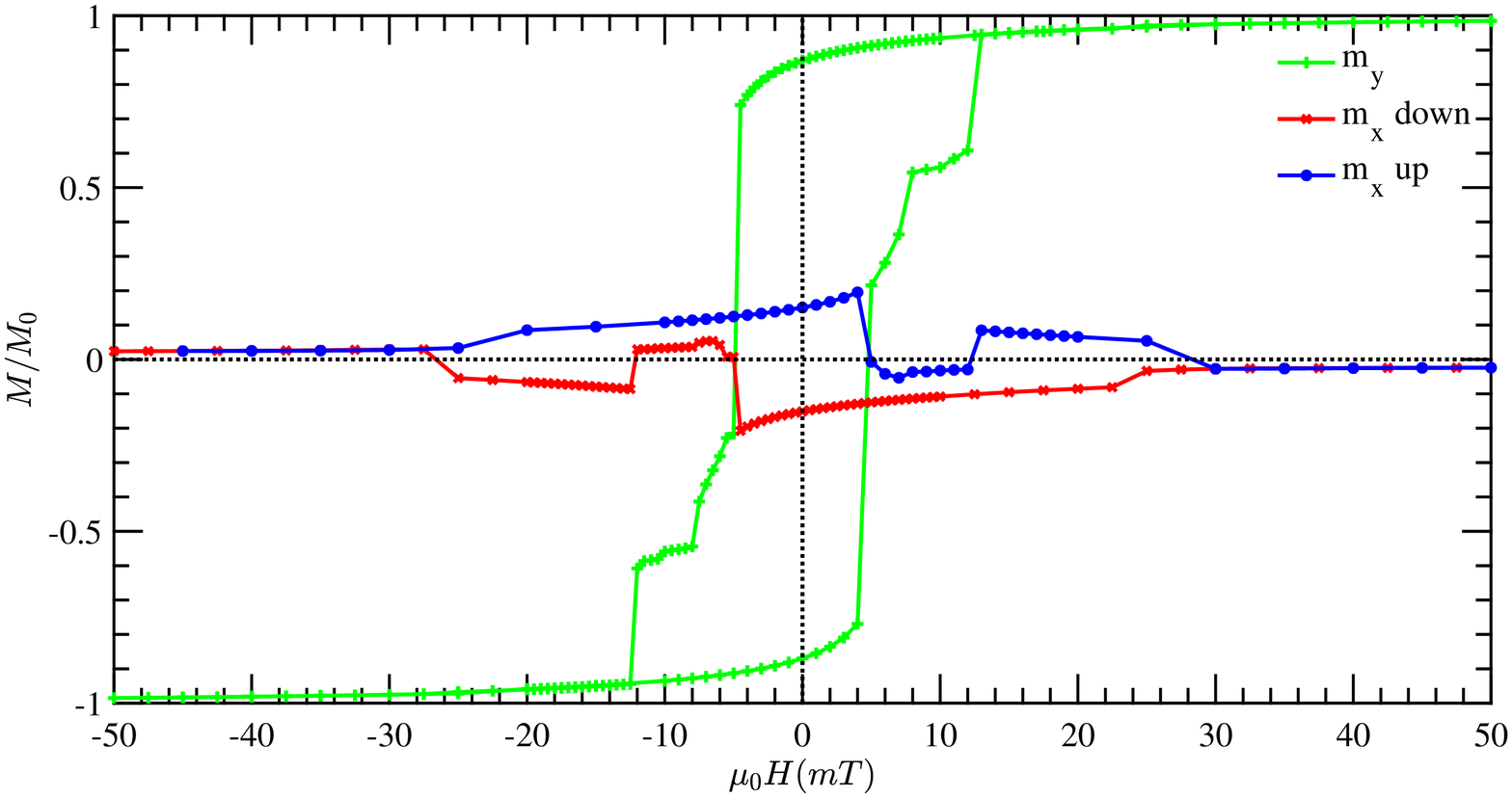}
}
\subfigure[$ H_0 \parallel \text{x-axis},mo96a$]
{ 
\includegraphics[scale=0.2]{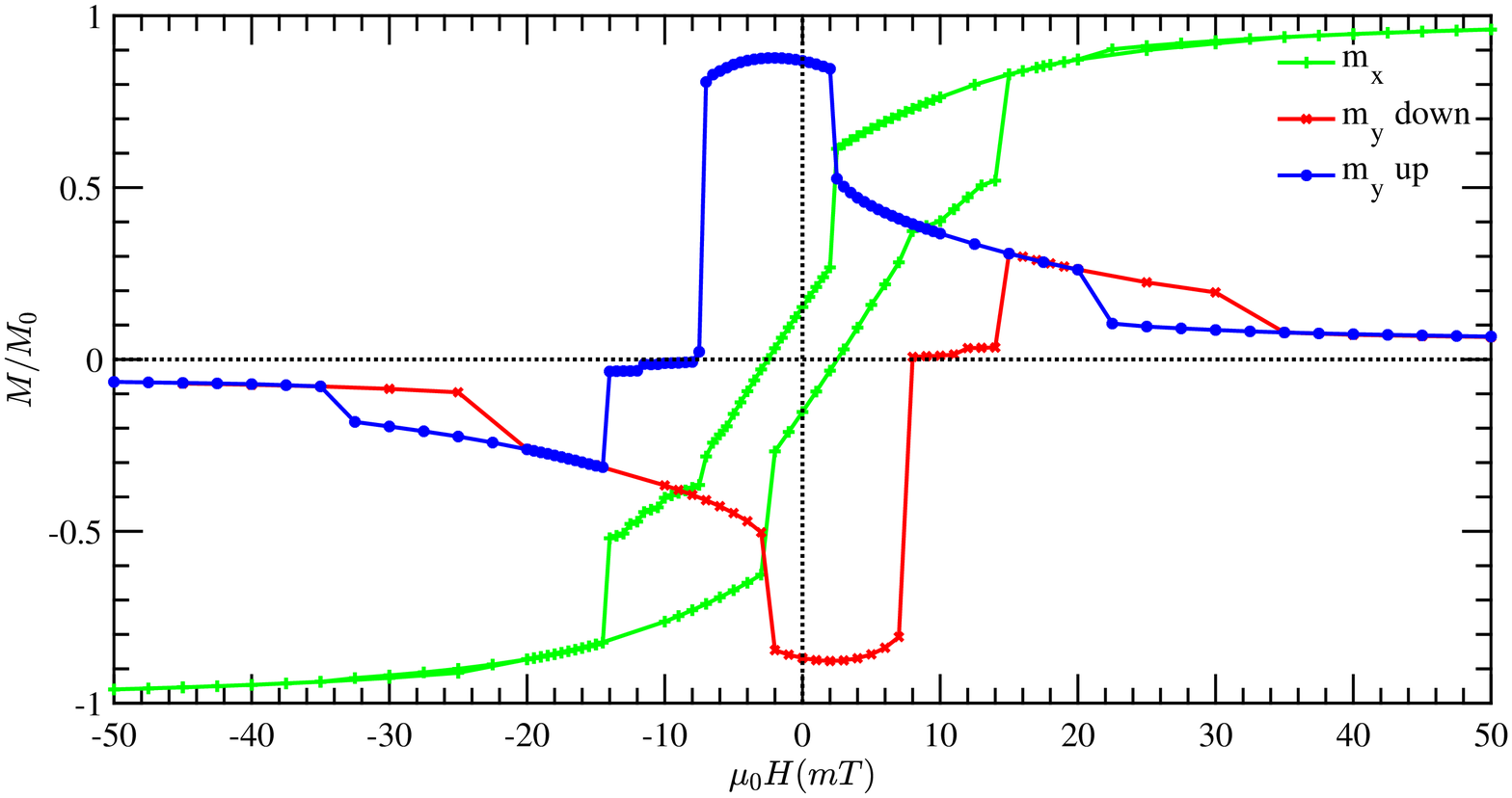}          
}
\subfigure[$ H_0 \parallel \text{y-axis},\text{SICN scheme}$]
{     
\includegraphics[scale=0.2]{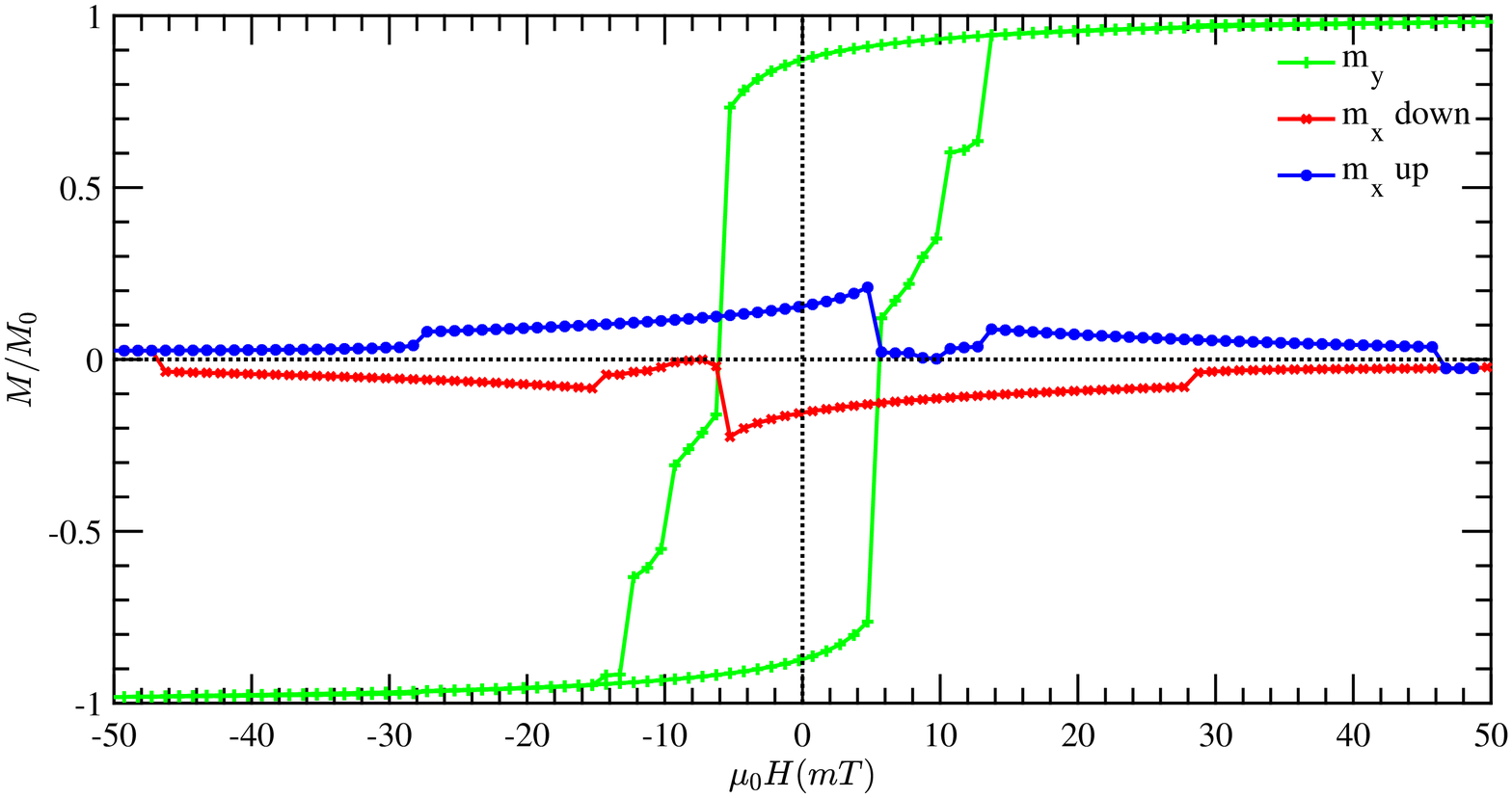}       
}
\subfigure[$ H_0 \parallel \text{x-axis},\text{SICN scheme}$]
{ 
\includegraphics[scale=0.2]{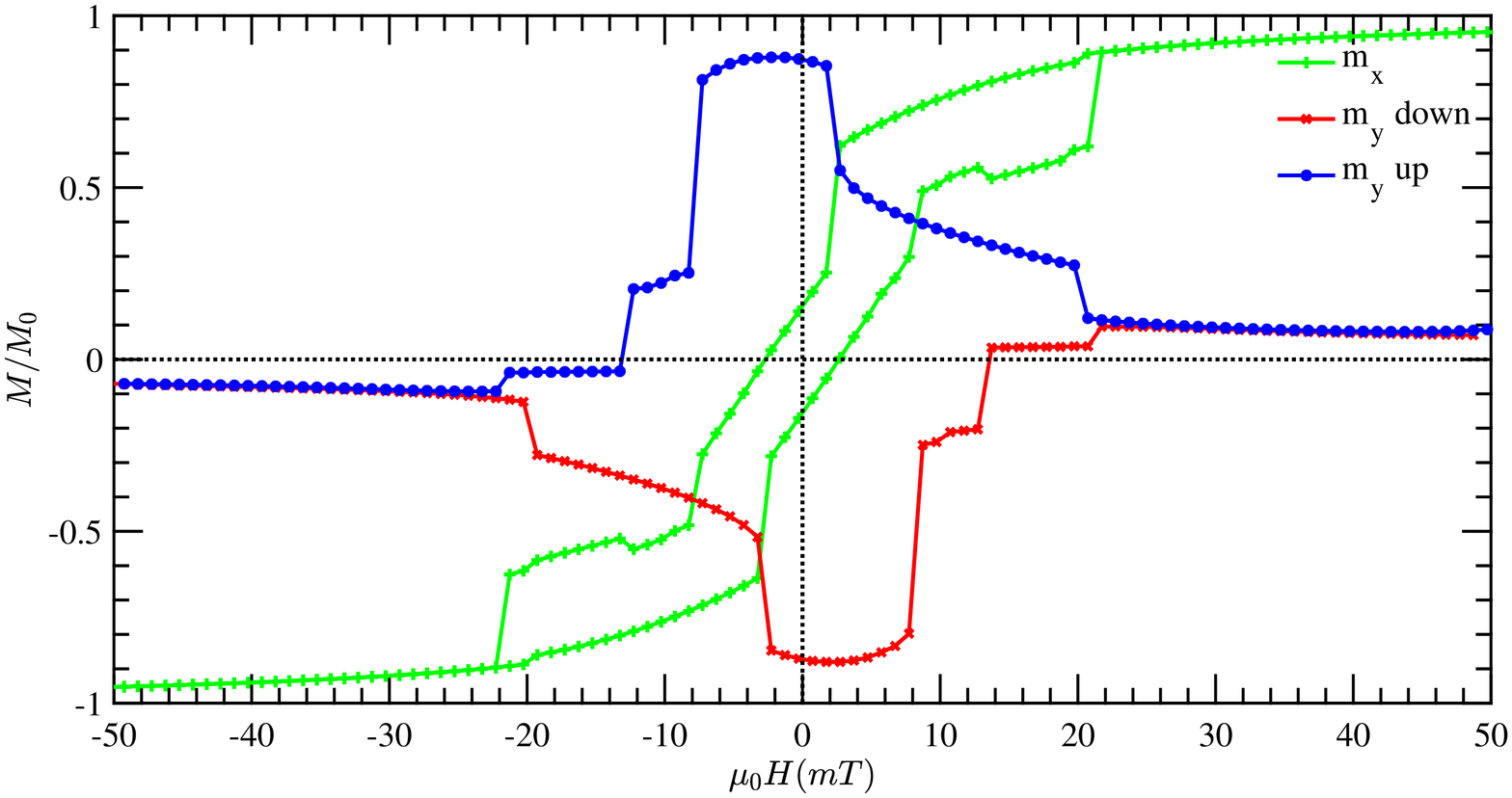}          
}
\caption{Hysteresis loops when $ \alpha=0.1 $ with the mesh size $ 20 \times 20 \times 20 \ \mathrm{nm}^3  $. The applied field is approximately parallel (canting angle $+1^\circ  $ ) to the $y$-axis (left column) and the $x$-axis (right column). Top row: \emph{mo96a}; Bottom row: SICN scheme.} 
\label{hloop}
\end{figure}

\section{Conclusions}\label{sec:conclusion}
In this work, we conduct a comprehensive study of the ICN scheme and the SICN scheme for the LLG equation. Theoretically, both schemes are second-order accurate in space and time. The unique solvability of nonlinear systems of equations in the ICN scheme requires a severe time step constraint, $k=\mathcal{O}(h^2)$, while the linear system of equations in the SICN scheme leads to an unconditionally unique solvability. Meanwhile, a much milder condition $k=\mathcal{O}(h)$ is needed to ensure the numerical convergence of the SICN scheme. Numerically, it is discovered that more than one solution may exist for the nonlinear system of equations in the ICN scheme. In terms of numerical performance, the SICN scheme reduces the CPU time by at least $50\%$, in comparison to the ICN scheme for the same accuracy requirement. Therefore, we strongly suggest that the semi-implicit scheme shall be used in micromagnetics simulations if both schemes are available.

\section*{Acknowledgements}
\thanks{This work was supported by Undergraduate Training Program for Innovation and Entrepreneurship, Soochow University Project 201910285013Z (Y. Sun), NSFC grants 11971021 (J. Chen) and 11501399 (R. Du), and NSF DMS-2012669 (C.~Wang).}

\bibliographystyle{plain}
\bibliography{LLG}

\end{document}